\newcommand{\calN}{{\cal N}}
\newcommand{\calJ}{{\cal J}}
\newcommand{\calbbs}{\boldsymbol{{\mathit{s}}}}%{\boldsymbol{{\mathit{s}}}}
\def \bbOmega    {{{\bf{\Omega}}}}
\newcommand{\calbbz}{\boldsymbol{{\mathit{z}}}}%{\boldsymbol{{\mathit{s}}}}
\newcommand{\calbbv}{\boldsymbol{{\mathit{v}}}}%{\boldsymbol{{\mathit{s}}}}
\newcommand{\calbbu}{\boldsymbol{{\mathit{u}}}}%{\boldsymbol{{\mathit{s}}}}
\def \calN {{\cal N}}
\def \calJ {{\cal J}}
\def \calbbs {\boldsymbol{{\mathit{s}}}}%{\boldsymbol{{\mathit{s}}}}
\def \bbOmega    {{{\bf{\Omega}}}}
\newcommand{\boxedeqnmulti}[1]{%
  \[\fbox{%
      \addtolength{\linewidth}{-2\fboxsep}%
      \addtolength{\linewidth}{-2\fboxrule}%
      \begin{minipage}{\linewidth}%
      \begin{align}#1\end{align}%
      \nonumber\end{minipage}%
    }\]%
}
\long\def\symbolfootnote[#1]#2{\begingroup
\def\thefootnote{\fnsymbol{footnote}}
\footnote[#1]{#2}\endgroup} \psfull
\begin{document}
% % % % % % % % % % % % % % % % % % % % % % % % % % % % % % % % % % % % % % % %
%                         Cover Page                                          %
% % % % % % % % % % % % % % % % % % % % % % % % % % % % % % % % % % % % % % % %

\title{\huge Distributed Recursive Least-Squares:\\ 
Stability and Performance Analysis$^\dag$}

\author{\it Gonzalo Mateos, Student Member, IEEE, 
and Georgios~B.~Giannakis, Fellow, IEEE$^\ast$}

\markboth{IEEE TRANSACTIONS ON SIGNAL PROCESSING (SUBMITTED)} \maketitle
\maketitle \symbolfootnote[0]{$\dag$ Work in this
paper was supported by the NSF grants CCF-0830480 and
ECCS-0824007.} \symbolfootnote[0]{$\ast$ The authors are with the
Dept. of Electrical and Computer Engineering, University of Minnesota, 200
Union Street SE, Minneapolis, MN 55455. Tel/fax: (612)626-7781/625-2002;
Emails: \texttt{\{mate0058,georgios\}@umn.edu}}

\vspace*{-80pt}
\begin{center}
\small{\bf Submitted: }\today\\
\end{center}
\vspace*{10pt}

% % % % % % % % % % % % % % % % % % % % % % % % % % % % % % % % % % % % % % % %
%                         Abstract                                            %
% % % % % % % % % % % % % % % % % % % % % % % % % % % % % % % % % % % % % % % %

\thispagestyle{empty}\addtocounter{page}{-1}
\begin{abstract}
The recursive least-squares (RLS) algorithm has well-documented merits
for reducing complexity and storage requirements, when it comes to online 
estimation of stationary signals as well as for tracking slowly-varying 
nonstationary processes. 
In this paper, a \textit{distributed} recursive least-squares (D-RLS) 
algorithm is developed for cooperative estimation using ad hoc wireless sensor
 networks.
Distributed 
iterations are obtained by minimizing a separable reformulation of the 
exponentially-weighted least-squares cost, using the alternating-minimization 
algorithm.  Sensors carry
 out reduced-complexity tasks locally, and exchange messages with one-hop 
neighbors to consent on the network-wide estimates adaptively. A steady-state 
mean-square 
error (MSE) 
performance analysis of D-RLS is conducted, by studying a stochastically-driven
 `averaged' system that approximates the D-RLS dynamics asymptotically in 
time. For sensor 
observations that are linearly
related to the time-invariant parameter vector sought, 
the simplifying independence setting
assumptions facilitate deriving accurate
closed-form expressions for the MSE steady-state values. The problems
of mean- and MSE-sense stability of D-RLS are also investigated, and 
easily-checkable sufficient conditions are derived under which a steady-state 
is attained. Without resorting to diminishing step-sizes which compromise 
the tracking ability of D-RLS, stability ensures that  per sensor estimates 
hover inside a ball of finite radius centered at the true 
parameter vector, with high-probability, even when 
inter-sensor communication links are noisy.  Interestingly, computer
simulations demonstrate that the theoretical findings are accurate also in 
the pragmatic settings whereby sensors acquire temporally-correlated data.
\end{abstract}

\begin{keywords}
Wireless sensor networks (WSNs), distributed estimation, RLS
algorithm, performance analysis.
\end{keywords}

% no keywords
%\newpage
% For peer review papers, you can put extra information on the cover
% page as needed:
\begin{center} \bfseries EDICS Category: SEN-DIST, SPC-PERF, SEN-ASAL 
\end{center}
%
% for peerreview papers, inserts a page break and creates the second title.
% Will be ignored for other modes.
%\IEEEpeerreviewmaketitle
%\newpage

% % % % % % % % % % % % % % % % % % % % % % % % % % % % % % % % % % % % % % % %
%                         Section I                                           %
% % % % % % % % % % % % % % % % % % % % % % % % % % % % % % % % % % % % % % % %

\newpage
\section{Introduction}
\label{sec:intro} Wireless sensor networks (WSNs), whereby large numbers of 
inexpensive sensors with constrained resources cooperate to achieve a common 
goal, constitute a promising technology for applications as diverse and crucial
as environmental monitoring, process control and fault diagnosis for the industry, 
and protection of critical infrastructure including the smart
 grid, just to name a few. Emergent WSNs have created renewed interest also in the 
field of distributed computing, calling for collaborative solutions that enable
 low-cost estimation of stationary signals as well as reduced-complexity 
tracking of nonstationary processes; see e.g.,~\cite{Xiao_Ale_Luo_GG_SPMag,Ale_Yannis_Stergios_GG_ControlsMag}.

In this paper, a \textit{distributed} recursive least-squares (D-RLS) 
algorithm is developed for estimation and tracking using ad hoc WSNs with noisy
 links, and analyzed in terms of its stability and mean-square error (MSE) 
steady-state performance. Ad hoc WSNs lack a central processing unit, and 
accordingly D-RLS performs in-network processing of the (spatially) distributed
 sensor observations. In words, a two-step iterative process takes place 
towards consenting on 
the desired global exponentially-weighted least-squares estimator (EWLSE): 
sensors perform simple local tasks to refine their current estimates, and 
exchange messages with one-hop neighbors over noisy communication channels. New
 sensor data acquired in real time enrich the estimation process and 
learn the unknown statistics `on-the-fly'. In addition, the exponential 
weighting effected 
through a forgetting factor endows D-RLS with tracking capabilities. 
This is desirable in a constantly changing environment, 
within which WSNs are envisioned to operate.

% % % % % % % % % % % % % % % % % % % % % % % % % % % % % % % % % % % % % % % %
%                         Subsection I-A                                      %
% % % % % % % % % % % % % % % % % % % % % % % % % % % % % % % % % % % % % % % %

\subsection{Prior art on distributed adaptive estimation}
\label{ssec:prior_art}
Unique challenges arising with WSNs dictate that often times sensors need to 
perform estimation in a constantly changing environment without having 
available a (statistical) model for the underlying processes of interest. This 
has motivated the development of \textit{distributed adaptive} estimation
schemes, generalizing the notion of adaptive filtering to a setup involving 
networked sensing/processing 
devices~\cite[SectionI-B]{Cattivelli_Lopes_Sayed_TSP_Diffusion_RLS}.

The incremental (I-) RLS algorithm
in~\cite{Lopes_Sayed_Incremental_RLS_2006} is one of the first such 
approaches, which sequentially incorporates new sensor data while performing
least-squares estimation. If one can afford maintaining a 
so-termed Hamiltonian cyclic path across sensors, then I-RLS yields the centralized 
EWLS benchmark estimate. Reducing the communication cost  
at a modest price in terms of estimation performance, an I-RLS variant
was also put forth in~\cite{Lopes_Sayed_Incremental_RLS_2006}; 
but the NP-hard challenge of determining a Hamiltonian cycle in 
large-size WSNs remains~\cite{Papadimitrious_Complexity_Book}. 
Without topological
constraints and increasing the degree of collaboration among
sensors, a diffusion RLS algorithm was proposed
in~\cite{Cattivelli_Lopes_Sayed_TSP_Diffusion_RLS}. In addition to
local estimates, sensors continuously diffuse raw sensor
observations and regression vectors per neighborhood. This
facilitates percolating new data across the WSN, but 
estimation performance is degraded in the presence of communication noise.
When both the sensor measurements and regression 
vectors are corrupted by additive (colored) noise, the
diffusion-based RLS algorithm of~\cite{Bertrand_Moonen_Sayed_TSP_Diffusion_BC_RLS} exploits sensor
cooperation to reduce bias in the EWLSE.
All~\cite{Cattivelli_Lopes_Sayed_TSP_Diffusion_RLS},~\cite{Bertrand_Moonen_Sayed_TSP_Diffusion_BC_RLS}
and~\cite{Lopes_Sayed_Incremental_RLS_2006} include steady-state
MSE performance analysis under the independence
setting assumptions~\cite[p. 448]{Sayed_Adaptive_Book}. Distributed least
mean-squares (LMS) counterparts are also available, trading off
computational complexity for estimation performance; for
noteworthy representatives
see~\cite{Lopes_Sayed_TSP_Diffusion_LMS,Yannis_Gonzalo_GG_DLMS,
Cattivelli_Sayed_TSP_Incremental_LMS}, and
references therein. Recent studies have also considered
more elaborate sensor processing strategies including 
projections~\cite{Li_Lopes_Chambers_Sayed_TSP_Incremental_LMS,
Chouvardas_Slavakis_Theoridis_TSP_Projections}, 
adaptive combination 
weights~\cite{Takahashi_Yamada_Sayed_TSP_LMS_Combiners}, 
or even sensor hierarchies~\cite{Yannis_Gonzalo_GG_DLMS,
Cattivelli_Sayed_SSP_2009}, and mobility~\cite{Tu_Sayed_JSTSP_Mobility}.

Several distributed (adaptive) estimation algorithms are rooted on iterative
optimization methods, which capitalize upon the separable structure of the
cost defining the desired estimator. The sample mean estimator was
formulated in~\cite{Rabbat_SPAWC_2005} as an optimization problem, and was
solved in a distributed fashion using a primal dual approach; see,
e.g.,~\cite{Bertsekas_Book_Distr}. Similarly, the incremental schemes
in e.g.,~\cite{Lopes_Sayed_Incremental_RLS_2006,Cattivelli_Sayed_TSP_Incremental_LMS,RNV'07,Rabbat_Incremental_Algorithms_2005} 
are all based on incremental
(sub)gradient methods~\cite{Nedic_Incremental_2001}. Even the
diffusion LMS algorithm of~\cite{Lopes_Sayed_TSP_Diffusion_LMS} has been
recently shown related to incremental strategies, when
these are adopted to optimize an approximate reformulation of the LMS
cost~\cite{Cattivelli_Sayed_TSP_Diffusion_LMS}. Building on the framework
introduced by~\cite{Yannis_Ale_GG_PartI},
the D-LMS and D-RLS algorithms in~\cite{Yannis_Gonzalo_GG_DLMS,
Gonzalo_Yannis_GG_DLMS_Perf,Gonzalo_Yannis_GG_DRLS} are obtained upon
recasting the respective decentralized estimation problems as multiple
equivalent constrained subproblems. The resulting minimization subtasks
are shown to be highly paralellizable across sensors, when carried out 
using the alternating-direction method of multipliers
(AD-MoM)~\cite{Bertsekas_Book_Distr}. Much related to the AD-MoM 
is the alternating minimization algorithm (AMA)~\cite{Tseng_AMA_1991},
used here to develop a novel D-RLS algorithm
offering reduced complexity when compared to
its counterpart of~\cite{Gonzalo_Yannis_GG_DRLS}.

% % % % % % % % % % % % % % % % % % % % % % % % % % % % % % % % % % % % % % % %
%                         Subsection I-B                                      %
% % % % % % % % % % % % % % % % % % % % % % % % % % % % % % % % % % % % % % % %

\subsection{Contributions and paper outline}
\label{ssec:contributions}
The present paper develops a fully distributed (D-) RLS type of
algorithm, which performs in-network, adaptive LS
estimation. D-RLS is applicable to general ad hoc WSNs that are
challenged by additive communication noise, and may lack a Hamiltonian
cycle altogether. Different
from the distributed Kalman trackers of e.g.,~\cite{Ale_Yannis_Stergios_GG_ControlsMag,Cattivelli_Sayed_TAC_Kalman}, 
the universality of the LS principle broadens the applicability of
D-RLS to a wide class of
distributed adaptive estimation tasks, since it requires no knowledge of the
underlying state space model. The algorithm is developed by reformulating the
EWLSE into an equivalent constrained form~\cite{Yannis_Ale_GG_PartI}, 
which can be minimized
in a distributed fashion by capitalizing on the separable structure
of the augmented Lagrangian using the 
AMA solver in~\cite{Tseng_AMA_1991} (Section \ref{sec:probstatement}). From 
an algorithmic standpoint, the novel distributed iterations here
offer two extra features relative to the AD-MoM-based D-RLS
variants in~\cite{DRLS_Asilomar_2007,Gonzalo_Yannis_GG_DRLS}. First,
as discussed in Section \ref{ssec:AD-MoM_DRLS} 
the per sensor computational complexity is markedly reduced, since there
is no need to explicitly carry out a matrix inversion per iteration
as in~\cite{Gonzalo_Yannis_GG_DRLS}. 
 Second, the approach here bypasses the need of the so-termed bridge 
 sensors~\cite{DRLS_Asilomar_2007}. As a result, a fully distributed 
 algorithm is obtained whereby all sensors perform
the same tasks in a more efficient manner, without introducing 
hierarchies that may require intricate recovery protocols 
to cope with sensor failures. 

Another contribution of the present paper pertains to a detailed stability
and MSE steady-state \textit{performance analysis} for D-RLS
(Section \ref{sec:Stability_Perf_Analysis}). These theoretical results 
were lacking in the algorithmic papers~\cite{DRLS_Asilomar_2007,Gonzalo_Yannis_GG_DRLS},
where claims were only supported via computer simulations. 
Evaluating the performance of (centralized) adaptive filters  
is a challenging problem in its own right; 
prior art is surveyed in
e.g.,~\cite{Solo_Stability_LMS},~\cite[pg. 120]{Solo_Adaptive_Book},
\cite[pg. 357]{Sayed_Adaptive_Book}, and the extensive list of references
therein. On top of that, a WSN setting introduces unique challenges in the
analysis such as space-time sensor data and multiple sources of additive
noise, a consequence of imperfect sensors and communication links. The
approach pursued here capitalizes on an `averaged' error-form representation of the
local recursions comprising D-RLS, as a global dynamical system described
by a stochastic difference-equation derived in Section \ref{ssec:analysis_prelim}.
The covariance matrix of the resulting state is then shown to encompass
all the information needed to evaluate the relevant global and
sensor-level performance metrics (Section \ref{ssec:fig_merit}).
For sensor observations that are linearly
related to the time-invariant parameter vector sought, 
the simplifying independence setting
assumptions~\cite[pg. 110]{Solo_Adaptive_Book},~\cite[pg.
448]{Sayed_Adaptive_Book} are key enablers towards deriving accurate
closed-form expressions for the mean-square deviation and
excess-MSE steady-state values (Section
\ref{ssec:SS_Perf}). Stability in the mean- and MSE-sense are also investigated,
 revealing easily-checkable sufficient conditions under 
which a steady-state is attained. 

Numerical tests corroborating the theoretical findings
are presented in Section
\ref{sec:sims_Perf}, while concluding remarks and possible directions
for future work are given in Section
\ref{sec:conc}.\\

% % % % % % % % % % % % % % % % % % % % % % % % % % % % % % % % % % % % % % % %
%                               Notation                                      %
% % % % % % % % % % % % % % % % % % % % % % % % % % % % % % % % % % % % % % % %

\noindent\textit{Notation:} Operators $\otimes$,
$(.)^{T}$, $(.)^{\dagger}$, $\lambda_{\max}(.)$,
$\mbox{tr}(.)$, $\mbox{diag}(.)$, $\mbox{bdiag}(.)$,
$E\left[.\right]$, $\textrm{vec}\left[.\right]$ will denote
Kronecker product, transposition, matrix
pseudo-inverse, spectral radius, matrix trace, diagonal matrix,
block diagonal matrix, expectation, and matrix vectorization,
respectively. For both vectors and matrices,
$\|.\|$ will stand for the $2-$norm. and $|.|$ for the
cardinality of a set or the magnitude
of a scalar. Positive definite matrix $\mathbf{M}$ will be
denoted by $\bbM\succ\mathbf{0}$. The $n\times n$ identity
matrix will be represented by $\mathbf{I}_{n}$, while
$\mathbf{1}_{n}$ will denote the $n\times 1$ vector of all ones
and $\mathbf{1}_{n\times m}:=\mathbf{1}_{n}\mathbf{1}_{m}^{T}$.
Similar notation will be adopted for vectors (matrices) of all
zeros. For matrix $\bbM\in\mathbb{R}^{m\times n}$,
$\textrm{nullspace}(\bbM):=\{\bbx\in\mathbb{R}^{n}:\bbM\bbx=\mathbf{0}_m\}$.
The $i$-th vector in the canonical basis for $\mathbb{R}^n$
will be denoted by $\bbb_{n,i}$, $i=1,\ldots,n$.

% % % % % % % % % % % % % % % % % % % % % % % % % % % % % % % % % % % % % % % %
%                         Section II                                          %
% % % % % % % % % % % % % % % % % % % % % % % % % % % % % % % % % % % % % % % %

\section{Problem Statement and Distributed RLS Algorithm}
\label{sec:probstatement} Consider a WSN with sensors $\{1,\ldots,
J\}:=\mathcal{J}$. Only single-hop communications are allowed,
i.e., sensor $j$ can communicate only with the sensors in its
neighborhood $\mathcal{N}_j\subseteq\mathcal{J}$, having
cardinality $|\calN_j|$. Assuming that inter-sensor links are
symmetric, the WSN is modeled as an undirected connected graph with
associated graph Laplacian matrix $\bbL$.
Different from
\cite{Cattivelli_Lopes_Sayed_TSP_Diffusion_RLS,Bertrand_Moonen_Sayed_TSP_Diffusion_BC_RLS} and
\cite{Lopes_Sayed_Incremental_RLS_2006}, the present network model
accounts explicitly for non-ideal sensor-to-sensor links.
Specifically, signals received at sensor $j$ from sensor $i$ at
discrete-time instant $t$ are corrupted by a zero-mean additive
noise vector $\bbeta_{j}^{i}(t)$, assumed temporally and spatially
uncorrelated. The communication noise covariance matrices are denoted by 
$\bbR_{\eta_j}:=E[\bbeta_{j}^{i}(t)(\bbeta_{j}^{i}(t))^T]$, $j\in\calJ$.

The WSN is deployed to estimate a real signal vector
$\bbs_0\in\mathbb{R}^{p\times 1}$ in a distributed fashion and
subject to the single-hop communication constraints, by resorting
to the LS criterion~\cite[p. 658]{Sayed_Adaptive_Book}. Per time
instant $t=0,1,\ldots,$ each sensor acquires a regression vector
$\bbh_j(t)\in\mathbb{R}^{p\times 1}$ and a scalar observation
$x_j(t)$, both assumed zero-mean without loss of generality. A
similar setting comprising complex-valued data was considered in
\cite{Cattivelli_Lopes_Sayed_TSP_Diffusion_RLS} and
\cite{Lopes_Sayed_Incremental_RLS_2006}. Here, the exposition focuses on
real-valued quantities for simplicity, but extensions to the complex case are 
straightforward. Given new data
sequentially acquired, a pertinent approach is to consider the
EWLSE
~\cite{Sayed_Adaptive_Book,Cattivelli_Lopes_Sayed_TSP_Diffusion_RLS,Lopes_Sayed_Incremental_RLS_2006}
\begin{equation}\label{estprblm}
\hhatbbs_{\textrm{ewls}}(t):=\mbox{arg}\:\min_{\bbs}\sum_{\tau=0}^{t}\sum_{j=1}^{J}
\lambda^{t-\tau}\left[x_{j}(\tau)-\bbh^{T}_{j}(\tau)\bbs\right]^{2}+
\lambda^{t}\bbs^{T}\bbPhi_0\bbs
\end{equation}
where $\lambda\in(0,1]$ is a forgetting factor, while 
$\bbPhi_0\succ \mathbf{0}_{p\times p}$ is included for regularization. Note
that in forming the EWLSE at time $t$, the entire history of data
$\{x_j(\tau),\bbh_j(\tau)\}_{\tau=0}^{t},$ $\forall\;j\in\calJ$ is
incorporated in the online estimation process. Whenever
$\lambda<1$, past data are exponentially discarded thus enabling
tracking of nonstationary processes. Regarding applications, a distributed power spectrum estimation
task matching the aforementioned problem statement, 
can be found in~\cite{Gonzalo_Yannis_GG_DRLS}.

To decompose the cost function in \eqref{estprblm}, in which
summands are coupled through the global variable $\bbs$, 
introduce auxiliary variables $\{\bbs_j\}_{j=1}^{J}$ 
representing local estimates of $\bbs_0$ per sensor $j$. These local
estimates are utilized to form the separable convex \emph{constrained}
minimization problem
\begin{align}
\{\hat{\mathbf{s}}_j(t)\}_{j=1}^{J}:=&\arg\min_{\{\mathbf{s}_j\}_{j=1}^{J}}
\sum_{\tau=0}^{t}\sum_{j=1}^{J}\lambda^{t-\tau}[
x_j(\tau)-\mathbf{h}_{j}^T(\tau)\mathbf{s}_j]^2+
J^{-1}\lambda^t\sum_{j=1}^{J}\mathbf{s}_j^T\mathbf{\Phi}_0\mathbf{s}_j,
\nonumber\\
&\textrm{s. t. }\; \mathbf{s}_j=
\mathbf{s}_{j'},\;\;j\in\calJ,\:\:j'\in\calN_j.\label{Decomp_RLS}
\end{align}
From the connectivity of the WSN, % it follows readily that
\eqref{estprblm} and \eqref{Decomp_RLS} are equivalent in the sense
that $\hat{\bbs}_j(t)=\hat{\bbs}_{\textrm{ewls}}(t)$, $\forall\: j\in\calJ$ and
 $t\geq 0$;
see also \cite{Yannis_Ale_GG_PartI}. To arrive at the D-RLS recursions, 
it is convenient to reparametrize the constraint set
 \eqref{Decomp_RLS} in the equivalent form
\begin{equation}\label{Constr_Equi}
\bbs_j=\bar{\bbz}_j^{j'},\:\:\bbs_{j'}=\tilde{\bbz}_j^{j'},\textrm{
  and  }\bar{\bbz}_j^{j'}=\tilde{\bbz}_j^{j'},\;\;
j\in\calJ,\;\;j'\in\calN_j,\;\;j\neq j'.
\end{equation}
where $\{\bar{\bbz}_j^{j'},\tilde{\bbz}_j^{j'}\}_{j'\in\calN_j}$, $j\in\calJ$, 
are auxiliary optimization variables that will be eventually eliminated.

% % % % % % % % % % % % % % % % % % % % % % % % % % % % % % % % % % % % % % % %
%                         Subsection II-A                                     %
% % % % % % % % % % % % % % % % % % % % % % % % % % % % % % % % % % % % % % % %

\subsection{The D-RLS algorithm}
\label{ssec:AMA_DRLS}

To tackle the constrained minimization problem 
\eqref{Decomp_RLS} at time instant $t$, associate
Lagrange multipliers $\bbv_{j}^{j'}$ and $\bbu_{j}^{j'}$ with the first
pair of
consensus constraints in \eqref{Constr_Equi}. Introduce the ordinary Lagrangian function
\begin{align}\label{Lagr}
\ccalL\left[\calbbs,\calbbz,\calbbv,\calbbu\right]={}&
\sum_{j=1}^{J}\sum_{\tau=0}^{t}\lambda^{t-\tau}[
x_j(\tau)-\mathbf{h}_{j}^T(\tau)\mathbf{s}_j]^2+J^{-1}\lambda^t\sum_{j=1}^{J}
\bbs_j^T\bbPhi_0\mathbf{s}_j\nonumber\\
%The constraints terms
&+\sum_{j=1}^{J}\sum_{j'\in\calN_{j}}\left[(\bbv_{j}^{j'})^{T}
(\bbs_{j}-\bar{\bbz}_j^{j'})+(\bbu_j^{j'})^T
(\bbs_{j'}-\tilde{\bbz}_j^{j'})\right]
\end{align}
as well as the quadratically \textit{augmented} Lagrangian 
\begin{equation}\label{augLagr}
\ccalL_c\left[\calbbs,\calbbz,\calbbv,\calbbu\right]=\ccalL\left[\calbbs,\calbbz,\bbv,\bbu\right]
+\frac{c}{2}\sum_{j=1}^{J}\sum_{j'\in\calN_{j}}\left[
\|\bbs_{j}-\bar{\bbz}_j^{j'}\|_2^2+\|\bbs_{j'}-\tilde{\bbz}_j^{j'}\|_2^2
\right]
\end{equation}
where $c$ is a positive penalty coefficient; and
$\calbbs:=\{\bbs_j\}_{j=1}^{J}$,
$\calbbz:=\{\bar{\bbz}_j^{j'},\tilde{\bbz}_j^{j'}\}_{j\in\calJ}^{j'\in\calN_j}$, and
$[\calbbv,\calbbu]:=\{\bbv_{j}^{j'},\bbu_{j}^{j'}\}^{j'\in\calN_j}_{j\in\calJ}$.
Observe that the remaining constraints in \eqref{Constr_Equi}, namely
$\calbbz\in C_z:=\{\calbbz\: :\:\bar{\bbz}_j^{j'}=\tilde{\bbz}_j^{j'},\;
j\in\calJ,\;j'\in\calN_j,\;j\neq j'\}$, have not been dualized.

Towards deriving the D-RLS recursions, the alternating minimization
algorithm (AMA) of~\cite{Tseng_AMA_1991} will be adopted here to tackle the 
separable EWLSE reformulation \eqref{Decomp_RLS} in a distributed fashion. Much
 related to AMA is 
the alternating-direction method of multipliers (AD-MoM), an iterative 
augmented Lagrangian method specially well-suited for parallel
processing 
~\cite{Yannis_Ale_GG_PartI,Gonzalo_Yannis_GG_DRLS,Bertsekas_Book_Distr}. While
the AD-MoM has been proven successful 
to tackle the optimization tasks stemming from general distributed estimators of deterministic and
(non-)stationary random signals, it is somehow curious that the AMA
has remained largely underutilized.

To minimize \eqref{Decomp_RLS} at time instant $t$, the AMA solver 
entails an iterative procedure comprising three steps per iteration
$k=0,1,2,\ldots$
\begin{description}
\item [{\bf [S1]}] \textbf{Multiplier updates:}\begin{align}
    \bbv_j^{j'}(t;k)&=\bbv_{j}^{j'}(t;k-1)+
c[\bbs_j(t;k)-\bar{\bbz}_j^{j'}(t;k)],
{\quad}j\in\calJ,{\:}j'\in\calN_{j}\nonumber\\
\bbu_j^{j'}(t;k)&=\bbu_{j}^{j'}(t;k-1)+
c[\bbs_{j'}(t;k)-\tilde{\bbz}_{j}^{j'}(t;k)],
{\quad}j\in\calJ,{\:}j'\in\calN_{j}.\nonumber
\end{align}

\item [{\bf [S2]}]  \textbf{Local estimate updates:}
    \begin{equation}\calbbs(t,k+1)=
    \mbox{arg}\:\min_{\calbbs}\ccalL\left[\calbbs,\calbbz(t,k),\calbbv(t,k),
    \calbbu(t,k)\right].
    \label{S2_AMA}\end{equation}

\item [{\bf [S3]}] \textbf{Auxiliary variable updates:}
    \begin{equation}\calbbz(t,k+1)=
    \mbox{arg}\:\min_{\calbbz\in 
C_z}\ccalL_{c}\left[\calbbs(t,k+1),\calbbz,\calbbv(t,k),\calbbu(t,k)\right
].
    \label{S3_AMA}\end{equation}
\end{description}
Steps [S1] and [S3] are identical to those in AD-MoM~\cite{Bertsekas_Book_Distr}. 
In words, these steps correspond to
dual ascent iterations to update the Lagrange multipliers, and
a block coordinate-descent minimization of the augmented Lagrangian with
respect to $\calbbz\in 
C_z$, respectively. The only
difference is with regards to the local estimate updates in [S2], where in
AMA the new iterates are obtained by minimizing the ordinary Lagrangian
with respect to $\calbbs$. For the sake of the aforementioned
minimization, all other variables are considered fixed taking their most
up to date values $\{\calbbz(t,k),\calbbv(t,k),\calbbu(t,k)\}$. For the
AD-MoM instead, the minimized quantity is the augmented Lagrangian both in [S2]
and in [S3]. 

The AMA was motivated in~\cite{Tseng_AMA_1991} for separable problems
that are strictly convex in $\calbbs$, but (possibly) only convex with respect 
to
$\calbbz$. Under this assumption, [S2] still yields a unique minimizer per iteration, 
and the AMA is useful for those cases in which the Lagrangian is much simpler
to optimize than the augmented Lagrangian. Because of the regularization matrix 
$\bbPhi_0\succ\mathbf{0}_{p\times p}$, the
EWLS cost in \eqref{Decomp_RLS} is indeed strictly convex for all
$t>0$, and the AMA is applicable. Section \ref{ssec:AD-MoM_DRLS} dwells into 
the benefits of minimizing the 
ordinary Lagrangian instead of its augmented counterpart \eqref{augLagr}, in 
the context
of distributed RLS estimation.

Carrying out the minimization in [S3] first, one finds
\begin{equation*}
\bar{\bbz}_{j}^{j'}(t,k+1)=\tilde{\bbz}_{j}^{j'}(t,k+1)=\frac{1}{2}\left[
\bbs_j(t,k+1)+\bbs_{j'}(t,k+1)\right]
, {\quad}j\in\calJ,{\:}j'\in\calN_{j}
\end{equation*}
so that $\bbv_j^{j'}(t;k)=-\bbu_j^{j'}(t;k)$ for all 
$k>-1$~\cite{Gonzalo_Yannis_GG_DRLS}. As a result
$\bbv_j^{j'}(t;k)$ is given by 
\begin{equation}\label{Vj_Update}
\bbv_{j}^{j'}(t;k)=\bbv_{j}^{j'}(t;k-1)
+\frac{c}{2}\left[\bbs_j(t;k)-\bbs_{j'}(t;k)\right],\;\;
j\in\calJ,\;\;j'\in{\cal N}_j.
\end{equation}
Moving on to [S2],
from the separable structure of \eqref{Lagr} the minimization
\eqref{S2_AMA} can be split into $J$ subproblems
\begin{equation*}
\bbs_j(t,k+1)=\mbox{arg}\:\min_{\bbs_j}\left[\sum_{\tau=0}^{t}\lambda^{t
-\tau}[
x_j(\tau)-\mathbf{h}_{j}^T(\tau)\mathbf{s}_j]^2+J^{-1}\lambda^{t}\bbs_j^T
\bbPhi_0\mathbf{s}_j
+\sum_{j'\in\calN_{j}}\left[\bbv_{j}^{j'}(t,k)
-\bbv^{j}_{j'}(t,k)\right]^{T}\bbs_{j}\right].
\end{equation*}
Since each of the local subproblems corresponds to an unconstrained quadratic
minimization, they all admit closed-form solutions
\begin{equation}
\bbs_j(t,k+1)=\bbPhi_j^{-1}(t)\bbpsi_j(t)-\frac{1}{2}\bbPhi_j^{-1}(t)
\sum_{j'\in\calN_j}\left[\bbv_{j}^{j'}(t,k)
-\bbv^{j}_{j'}(t,k)\right] \label{sjupdate_AMA}
\end{equation}
where
\begin{align}
\mathbf{\Phi}_j(t)&:=\sum_{\tau=0}^{t}\lambda^{t-\tau}\mathbf{h}_j(\tau)\mathbf{h}_j^T(\tau)+J^{-1}\lambda^{t}
\mathbf{\Phi}_0=\lambda \mathbf{\Phi}_j(t-1)+\mathbf{h}_j(t)\mathbf{h}_j^T(t)\label{Phi_update_AMA}\\
\bbpsi_j(t)&:=\sum_{\tau=0}^{t}\lambda^{t-\tau}\mathbf{h}_j(\tau)x_j(\tau)=
\lambda\bbpsi_j(t-1)+\bbh_j(t)x_j(t)\label{psi_update_AMA}.
\end{align}
Recursions \eqref{Vj_Update} and
\eqref{sjupdate_AMA} constitute the AMA-based D-RLS algorithm,
whereby all sensors  $j\in\calJ$ keep track of their local estimate
$\bbs_j(t;k+1)$ and their multipliers
$\{\bbv_j^{j'}(t;k)\}_{j'\in\calN_j}$, which can be arbitrarily
initialized. From the rank-one update in \eqref{Phi_update_AMA} 
and capitalizing on the matrix inversion lemma, matrix
$\mathbf{\Phi}_j^{-1}(t)$ can be efficiently updated according to
\begin{equation}\label{Updating_Phij}
\bbPhi_j^{-1}(t)= \lambda^{-1}\bbPhi_j^{-1}(t-1)-
\frac{\lambda^{-1}\bbPhi_j^{-1}(t)\bbh_j(t)\bbh_j^T(t)\bbPhi_j^{-1}(t-1)}
{\lambda+\bbh_j^T(t)\bbPhi_j^{-1}(t-1)\bbh_j(t)}.
\end{equation}
with complexity ${\cal O}(p^2)$. It is recommended to initialize the
matrix recursion with
$\mathbf{\Phi}_j^{-1}(0)=J\bbPhi_0^{-1}:=\delta\bbI_p$, where $\delta>0$
is chosen sufficiently large~\cite{Sayed_Adaptive_Book}. Not surprisingly,
by direct application of the convergence results in~\cite[Proposition
3]{Tseng_AMA_1991}, it follows that:

\begin{proposition}\label{prop_2}
For arbitrarily initialized
$\{\bbv_j^{j'}(t;-1)\}_{j\in\calJ}^{j'\in\calN_j}$, $\bbs_j(t;0)$ and
$c\in(0,c_{u})$; the local estimates $\bbs_j(t;k)$ generated by
\eqref{sjupdate_AMA} reach consensus as $k\rightarrow\infty$; i.e.,
$$\lim_{k\rightarrow\infty}\bbs_j(t;k)=\hat{\bbs}_{\textrm{ewls}}(t),\;\textrm{
for all }j\in\calJ.$$
\end{proposition}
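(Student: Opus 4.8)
The plan is to recognize \eqref{Decomp_RLS}, reparametrized through \eqref{Constr_Equi}, as an instance of the separable linearly-constrained program to which the AMA convergence theory of~\cite{Tseng_AMA_1991} applies essentially verbatim, and then to push the guaranteed primal convergence back through the equivalence between \eqref{Decomp_RLS} and \eqref{estprblm}. First I would write the problem in Tseng's canonical form $\min_{\calbbs,\calbbz}\{f(\calbbs)+g(\calbbz)\}$ subject to the linear constraints $\bbs_j=\bar{\bbz}_j^{j'}$, $\bbs_{j'}=\tilde{\bbz}_j^{j'}$ collected as $\bbA\calbbs+\bbB\calbbz=\mathbf{0}$, where $f(\calbbs)$ is the time-$t$ EWLS cost in \eqref{Decomp_RLS}, $g(\calbbz)$ is the indicator function of the subspace $C_z$ (so that the undualized constraint $\bar{\bbz}_j^{j'}=\tilde{\bbz}_j^{j'}$ is enforced), and $\bbA$, $\bbB$ are the selection matrices induced by the edge set of the WSN graph. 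The multipliers $\{\bbv_j^{j'},\bbu_j^{j'}\}$ are precisely the dual variables of $\bbA\calbbs+\bbB\calbbz=\mathbf{0}$, and one checks term-by-term that [S1]--[S3] coincide with a single AMA sweep: a dual-ascent update of the multipliers, an exact minimization of the \emph{ordinary} Lagrangian over $\calbbs$, and a minimization of the augmented Lagrangian over $\calbbz\in C_z$.

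Next I would verify the hypotheses of~\cite[Proposition 3]{Tseng_AMA_1991}. The decisive one is strict---indeed strong---convexity of $f$ in $\calbbs$: since $\bbPhi_0\succ\mathbf{0}_{p\times p}$, the Hessian of the $j$-th summand equals $2\bbPhi_j(t)$ with $\bbPhi_j(t)$ as in \eqref{Phi_update_AMA}, which is positive definite for every fixed $t>0$. Hence $f$ is strongly convex with modulus bounded below by $2\min_{j}\lambda_{\min}(\bbPhi_j(t))>0$; consequently the [S2] subproblems admit the unique closed-form minimizer \eqref{sjupdate_AMA}, and the AMA iterates are well defined. The function $g$ is convex (the indicator of a subspace), a saddle point of the Lagrangian exists because the constrained problem is feasible and bounded below, and the constraint operator $\bbA$ is fixed. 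Tseng's theorem then guarantees that for any penalty $c$ below a threshold $c_u$---determined by the strong-convexity modulus $\sigma$ and by $\lambda_{\max}(\bbA^T\bbA)$, and hence by the network connectivity encoded in the graph Laplacian $\bbL$---the primal sequence $\bbs_j(t;k)$ converges, as $k\rightarrow\infty$, to the unique minimizer of \eqref{Decomp_RLS}.

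Finally, feasibility of the limit forces the consensus constraints to hold, so the per-sensor limits coincide; and by the equivalence of \eqref{Decomp_RLS} and \eqref{estprblm} established earlier from the connectivity of the WSN, this common value is exactly $\hat{\bbs}_{\textrm{ewls}}(t)$, which is the claim.

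I expect the only genuine work to lie in the bookkeeping of the first step---casting the edge-indexed consensus constraints into the matrix form $\bbA\calbbs+\bbB\calbbz=\mathbf{0}$ and confirming that [S1]--[S3] reproduce an AMA sweep---together with pinning down the admissible range $(0,c_u)$. Once these are in place, convergence is immediate from~\cite{Tseng_AMA_1991} and requires no separate argument, which is why the statement is phrased as a direct application.
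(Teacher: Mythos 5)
Your proposal is correct and follows exactly the route the paper takes: the paper proves Proposition 1 precisely by noting that the regularization $\bbPhi_0\succ\mathbf{0}_{p\times p}$ makes the cost in \eqref{Decomp_RLS} strictly convex, invoking \cite[Proposition 3]{Tseng_AMA_1991} directly, and describing $c_u$ as proportional to the convexity modulus and inversely proportional to the norm of the constraint matrix---the same ingredients you verify. Your write-up merely fills in the bookkeeping (casting \eqref{Constr_Equi} into Tseng's canonical form and checking that [S1]--[S3] constitute one AMA sweep) that the paper leaves implicit.
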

The upper bound $c_u$ is proportional to the
modulus of the strictly convex cost function in \eqref{Decomp_RLS}, and
inversely proportional to the norm of a matrix suitably chosen to express
the linear constraints in \eqref{Constr_Equi}; further details are 
in~\cite[Section 4]{Tseng_AMA_1991}. Proposition \ref{prop_2} asserts that per 
time instant $t$, the AMA-based D-RLS algorithm yields a sequence of local 
estimates that converge to the global EWLSE sought, as $k\to\infty$, 
or, pragmatically for large enough $k$. In principle, one could argue that running many 
consensus iterations may not be a problem in a stationary environment. However, 
when the WSN is deployed to track a time-varying parameter 
vector $\bbs_0(t)$, one cannot afford significant delays in-between 
consecutive sensing instants. 

One possible way to overcome this hurdle is to run a single consensus iteration
 per acquired observation
$x_{j}(t)$. Specifically, letting $k=t$ in recursions
\eqref{Vj_Update}-\eqref{sjupdate_AMA}, one arrives at a single time scale
D-RLS algorithm which is suitable for operation in nonstationary WSN
environments. Accounting also for additive communication noise that
corrupts the exchanges of multipliers and local estimates, the per sensor
tasks comprising the novel AMA-based \textit{single time scale} D-RLS algorithm
 are given by
\boxedeqnmulti{\mathbf{v}_j^{j'}(t)={}&\mathbf{v}_{j}^{j'}(t-1)
+\frac{c}{2}\left[\mathbf{s}_{j}(t)-({\mathbf{s}}_{j'}(t)+\bbeta_j^{j'}(t))
\right],
{\quad}j'\in\calN_j\label{ST_Vj_Update_AMA}\\
\bbPhi_j^{-1}(t+1)={}& \lambda^{-1}\bbPhi_j^{-1}(t)-
\frac{\lambda^{-1}\bbPhi_j^{-1}(t)\bbh_j(t+1)\bbh_j^T(t+1)\bbPhi_j^{-1}(t)}
{\lambda+\bbh_j^T(t+1)\bbPhi_j^{-1}(t)\bbh_j(t+1)}\label{ST_Phij_Update_AMA}\\
\bbpsi_j(t+1)={}&\lambda\bbpsi_j(t)+\bbh_j(t+1)x_j(t+1)
\label{ST_psij_Update_AMA}\\
\bbs_j(t+1)={}&\bbPhi_j^{-1}(t+1)\bbpsi_j(t+1)
-\frac{1}{2}\bbPhi_j^{-1}(t+1)\sum_{j'\in\calN_j}\left[\bbv_{j}^{j'}(t)-
(\bbv_{j'}^j(t)+\bar{\bbeta}_{j}^{j'}(t))\right].\label{ST_Sj_Update_AMA}}
Recursions \eqref{ST_Vj_Update_AMA}-\eqref{ST_psij_Update_AMA} are tabulated as 
Algorithm \ref{AMA_STD-RLS_algorithm_table}, which also details the inter-sensor
communications of multipliers and local estimates taking place within neighborhoods. When
powerful error control codes render inter-sensor links virtually ideal,
direct application of the results in~\cite{Gonzalo_Yannis_GG_DRLS,Gonzalo_Yannis_GG_DLMS_Perf} 
show that D-RLS can be further simplified to reduce the communication overhead and memory
storage requirements.
\setlength{\arraycolsep}{5pt}
%%%%%%%%%%%
%%%%%%%%%%%
\begin{algorithm}[t]
\caption{: AMA-based D-RLS} \small{
\begin{algorithmic}
    \STATE Arbitrarily initialize $\{\bbs_j(0)\}_{j=1}^{J}$
    and $\{\bbv_{j}^{j'}(-1)\}_{j\in\calJ}^{j'\in{\calN}_{j}}$.
    \FOR {$t=0,1$,$\ldots$}
        \STATE All $j\in\calJ$: transmit $\bbs_{j}(t)$ to neighbors in 
$\calN_j$.
        \STATE All $j\in\calJ$: update $\{\bbv^{j'}_j(t)\}_{j'\in\calN_j}$ 
using \eqref{ST_Vj_Update_AMA}.
        \STATE All $j\in\calJ$: transmit $\bbv^{j'}_j(t)$ to each 
$j'\in\calN_{j}$.
		\STATE All $j\in\calJ$: update $\bbPhi_{j}(t+1)$ and $\bbpsi_j(t+1)$ 
using 
		\eqref{ST_Phij_Update_AMA} and \eqref{ST_psij_Update_AMA}, 
respectively.
        \STATE All $j\in\calJ$: update $\bbs_{j}(t+1)$ using 
\eqref{ST_Sj_Update_AMA}.
    \ENDFOR
\end{algorithmic}}
\label{AMA_STD-RLS_algorithm_table}
\end{algorithm}
%

% % % % % % % % % % % % % % % % % % % % % % % % % % % % % % % % % % % % % % % %
%                         Subsection II-B                                     %
% % % % % % % % % % % % % % % % % % % % % % % % % % % % % % % % % % % % % % % %

\subsection{Comparison with the AD-MoM-based D-RLS algorithm}
\label{ssec:AD-MoM_DRLS}

A related D-RLS algorithm was put forth in~\cite{Gonzalo_Yannis_GG_DRLS}, 
whereby the decomposable exponentially-weighted LS cost \eqref{Decomp_RLS} is 
minimized using the AD-MoM, rather than the AMA as in Section 
\ref{ssec:AMA_DRLS}. Recall that the AD-MoM solver yields $\bbs_j(t+1)$ as the 
optimizer of the augmented 
Lagragian, while its AMA counterpart minimizes 
the ordinary Lagrangian instead. Consequently, different from 
\eqref{ST_Sj_Update_AMA} local estimates 
in the AD-MoM-based D-RLS algorithm of~\cite{Gonzalo_Yannis_GG_DRLS} are 
updated via
\begin{align}
\bbs_j(t+1)&=\bar\bbPhi_j^{-1}(t+1)\bbpsi_j(t+1)
+\frac{c}{2}\bar\bbPhi_j^{-1}(t+1)\sum_{j'\in\calN_j}\left[
\bbs_j(t)+(\bbs_{j'}(t)+\bbeta_{j}^{j'}(t))\right]
\nonumber\\
&\hspace{2.8cm}-\frac{1}{2}\bar\bbPhi_j^{-1}(t+1)\sum_{j'\in\calN_j}\left[\bbv_
{j}^{
j'}(t)-
(\bbv_{j'}^j(t)+\bar{\bbeta}_{j}^{j'}(t))\right]\label{ST_Sj_Update_ADMoM}
\end{align}
where [cf. \eqref{Phi_update_AMA}]
\begin{equation}\label{Phij_def_ADMoM}
\bar\bbPhi_j(t):=\sum_{\tau=0}^{t}\lambda^{t-\tau}\mathbf{h}_j(\tau)
\mathbf{h}_j^T(\tau)+J^{-1}\lambda^{t}\mathbf{\Phi}_0+c|\calN_j|\bbI_p.
\end{equation}
Unless $\lambda=1$, it is impossible to derive a rank-one update for 
$\bar{\bbPhi}_j(t)$ as in \eqref{Phi_update_AMA}. The reason is  the 
regularization term $c|\calN_j|\bbI_p$ in
\eqref{Phij_def_ADMoM}, a direct consequence of the quadratic penalty in the
augmented Lagrangian \eqref{augLagr}. This prevents one from efficiently 
updating 
$\bar\bbPhi_j^{-1}(t+1)$ in \eqref{ST_Sj_Update_ADMoM}  using the 
matrix inversion lemma [cf. \eqref{ST_Phij_Update_AMA}]. Direct inversion of
$\bar\bbPhi_j(t+1)$ per iteration dominates the computational
complexity of the AD-MoM-based D-RLS algorithm, which is roughly $\mathcal{O}(p^3)$~\cite{Gonzalo_Yannis_GG_DRLS}. 

Unfortunately, the 
penalty coefficient cannot be set to zero because the D-RLS algorithm breaks 
down. For instance, when the
initial Lagrange multipliers are null and $c=0$, D-RLS boils down to
a purely local (L-) RLS algorithm where sensors do not cooperate, hence consensus cannot 
be attained. 
All in all, the novel AMA-based D-RLS algorithm of this paper
offers an improved alternative with an order of magnitude reduction in terms of computational
complexity per sensor. With
regards to communication cost, the AD-MoM-based D-RLS and Algorithm 1 here incur
identical overheads; see~\cite[Sec. III-B]{Gonzalo_Yannis_GG_DRLS} for a detailed
analysis of the associated cost, as well as comparisons with the I-RLS~\cite{Lopes_Sayed_Incremental_RLS_2006} 
and diffusion RLS algorithms~\cite{Cattivelli_Lopes_Sayed_TSP_Diffusion_RLS}.

While the AMA-based D-RLS algorithm
is less complex computationally than its AD-MoM counterpart 
in~\cite{Gonzalo_Yannis_GG_DRLS}, 
Proposition \ref{prop_2} asserts that when many consensus iterations can be 
afforded, convergence to the centralized EWLSE is guaranteed provided 
$c\in(0,c_u)$.  On
the other hand, the AD-MoM-based D-RLS algorithm will attain the EWLSE for
any $c>0$ (cf.~\cite[Prop. 1]{Gonzalo_Yannis_GG_DRLS}). In addition, it does not require tuning
the extra parameter $\delta$, since it is applicable when
$\bbPhi_0=\mathbf{0}_{p\times p}$ because the augmented Lagrangian provides the
 needed
regularization. 

% % % % % % % % % % % % % % % % % % % % % % % % % % % % % % % % % % % % % % % %
%                         Section III                                         %
% % % % % % % % % % % % % % % % % % % % % % % % % % % % % % % % % % % % % % % %

\section{Analysis Preliminaries}
\label{sec:analysis_prelim}

% % % % % % % % % % % % % % % % % % % % % % % % % % % % % % % % % % % % % % % %
%                         Subsection III-A                                    %
% % % % % % % % % % % % % % % % % % % % % % % % % % % % % % % % % % % % % % % %

\subsection{Scope of the analysis: assumptions and approximations}
\label{ssec:scope}
Performance evaluation of the D-RLS algorithm is much more involved than
that of e.g., D-LMS~\cite{Yannis_Gonzalo_GG_DLMS,Gonzalo_Yannis_GG_DLMS_Perf}. 
The challenges are well documented for the classical
(centralized) LMS and RLS
filters~\cite{Sayed_Adaptive_Book,Solo_Adaptive_Book}, and results for the
latter are less common and typically involve simplifying approximations.
What is more, the distributed setting  introduces
unique challenges in the analysis. These include space-time sensor data
and multiple sources of additive noise, a consequence of imperfect sensors
and communication links. 

In order to proceed, a few typical modeling assumptions are introduced to
delineate the scope of the ensuing stability and performance results. For
all $j\in\calJ$, it is assumed that:
\begin{description}
\item [{\bf (a1)}] \emph{Sensor observations adhere to the linear model
    $x_j(t)=\mathbf{h}_j^T(t)\mathbf{s}_0+\epsilon_j(t)$, where the
    zero-mean white noise $\{\epsilon_j(t)\}$ has variance
    $\sigma_{\epsilon_j}^2$;}
\item [{\bf (a2)}] Vectors \emph{$\{\bbh_{j}(t)\}$ are
    spatio-temporally white with covariance matrix
    $\bbR_{h_j}\succ\mathbf{0}_{p\times p}$; and}
\item [{\bf (a3)}] Vectors \emph{$\{\bbh_{j}(t)\}$,
    $\{\epsilon_j(t)\}$, $\{\bbeta_j^{j'}(t)\}_{j'\in\calN_j}$ and
    $\{\bar{\bbeta}_j^{j'}(t)\}_{j'\in\calN_j}$ are independent.}
\end{description}
Assumptions (a1)-(a3) comprise the widely adopted \textit{independence
setting}, for sensor observations that are linearly related to the
time-invariant parameter of interest; see e.g.,~\cite[pg.
110]{Solo_Adaptive_Book},~\cite[pg. 448]{Sayed_Adaptive_Book}. Clearly,
(a2) can be violated in, e.g., FIR filtering of signals (regressors) with
a shift structure as in the distributed power spectrum estimation problem
described in~\cite{Yannis_Gonzalo_GG_DLMS} and~\cite{Gonzalo_Yannis_GG_DRLS}. 
Nevertheless, the
steady-state performance results extend accurately to the pragmatic setup
that involves time-correlated sensor data; see also the numerical tests in
Section \ref{sec:sims_Perf}. In line with a distributed setting such
 as a WSN, the statistical profiles of both  
regressors and the noise quantities vary across sensors (space), yet they are 
assumed to remain time invariant. For a related analysis of a distributed LMS 
algorithm operating in a 
nonstationary environment, the reader is referred 
to~\cite{Gonzalo_Yannis_GG_DLMS_Perf}.

In the particular case of the D-RLS algorithm, a unique challenge stems from
 the stochastic matrices $\bbPhi_j^{-1}(t)$ present in the
local estimate updates \eqref{ST_Sj_Update_AMA}. Recalling 
\eqref{Phi_update_AMA},
it is apparent that $\bbPhi_j^{-1}(t)$ depends upon the \textit{whole
history} of local regression vectors $\{\bbh_j(\tau)\}_{\tau=0}^t$. Even
obtaining $\bbPhi_j^{-1}(t)$'s distribution or computing its expected
value is a formidable task in general, due to the matrix inversion
operation. It is for these reasons that some simplifying approximations
will be adopted in the sequel, to carry out the analysis that otherwise
becomes intractable.

Neglecting the regularization term in \eqref{Phi_update_AMA} that vanishes
exponentially as $t\to\infty$, the matrix $\mathbf{\Phi}_j(t)$ is obtained
as an exponentially weighted moving average (EWMA). The EWMA can be seen
as an average modulated by a sliding window of equivalent length
$1/(1-\lambda)$, which clearly grows as $\lambda\to 1$. This observation
in conjunction with (a2) and the strong law of large numbers, justifies
the approximation
\begin{equation}\label{Phij_Averaged}
\mathbf{\Phi}_j(t)\approx
E[\mathbf{\Phi}_j(t)]=\frac{\bbR_{h_j}}{1-\lambda}, {\quad}0\ll\lambda<1\textrm{  and  }t\to\infty.
\end{equation}
The expectation of $\mathbf{\Phi}_j^{-1}(t)$, on the other hand, is 
considerably
harder to evaluate. To overcome this challenge, the
following approximation will be invoked~\cite{Sayed_Adaptive_Book,
Cattivelli_Lopes_Sayed_TSP_Diffusion_RLS}
\begin{equation}\label{invPhij_Averaged}
E[\mathbf{\Phi}_j^{-1}(t)]\approx E^{-1}[\mathbf{\Phi}_j(t)]
\approx(1-\lambda)\bbR_{h_j}^{-1}, {\quad}0\ll\lambda<1\textrm{  and  }t\to\infty.
\end{equation}
It is admittedly a crude approximation at first sight, because
$E\left[X^{-1}\right]\neq E[X]^{-1}$ in general, for any random variable $X$. 
However,
experimental evidence suggests that the approximation is sufficiently
accurate for all practical purposes, when the forgetting factor approaches
unity~\cite[p. 319]{Sayed_Adaptive_Book}.

%As discussed in this section, the
%resulting estimation error covariance matrix encompasses all the
%information needed to evaluate the relevant performance metrics; namely
%MSE, excess mean-square error (EMSE) and mean-square deviation (MSD). The
%aforementioned figures of merit ultimately assess the performance of
%D-LMS, both on a per-sensor basis and collectively by considering the WSN
%as a whole.

% % % % % % % % % % % % % % % % % % % % % % % % % % % % % % % % % % % % % % % %
%                         Subsection III-B                                    %
% % % % % % % % % % % % % % % % % % % % % % % % % % % % % % % % % % % % % % % %

\subsection{Error-form D-RLS}
\label{ssec:analysis_prelim} 
The approach here to steady-state performance analysis relies on an `averaged'
error-form system representation of D-RLS in 
\eqref{ST_Vj_Update_AMA}-\eqref{ST_Sj_Update_AMA}, where 
$\mathbf{\Phi}_j^{-1}(t)$
in \eqref{ST_Sj_Update_AMA} is replaced by the approximation
$(1-\lambda)\bbR_{h_j}^{-1}$, for sufficiently large $t$. Somehow related
approaches were adopted in~\cite{Cattivelli_Lopes_Sayed_TSP_Diffusion_RLS}
and~\cite{Bertrand_Moonen_Sayed_TSP_Diffusion_BC_RLS}. Other noteworthy analysis 
techniques include the energy-conservation methodology
in~\cite{Sayed_Energy_Conservation},~\cite[p. 287]{Sayed_Adaptive_Book},
and stochastic averaging~\cite[p. 229]{Solo_Adaptive_Book}. For
performance analysis of distributed adaptive algorithms seeking
time-invariant parameters, the former has been applied
in e.g.,~\cite{Lopes_Sayed_TSP_Incremental},
\cite{Lopes_Sayed_TSP_Diffusion_LMS},
while the latter can be found in~\cite{Yannis_Gonzalo_GG_DLMS}.

Towards obtaining such error-form representation, introduce the 
local estimation errors
$\{\bby_{1,j}(t):=\bbs_j(t)-\bbs_0\}_{j=1}^J$ and multiplier-based
quantities
$\{\bby_{2,j}(t):=\frac{1}{2}\sum_{j'\in\calN_j}(\bbv_j^{j'}(t-1)-\bbv^j_{j'}(t-1))\}_{j=1}^J$.
It turns out that a convenient global state to describe the spatio-temporal
dynamics of D-RLS in \eqref{ST_Vj_Update_AMA}-\eqref{ST_Sj_Update_AMA} is
$\bby(t):=[\bby_{1}^T(t)\;\bby_{2}^T(t)]^T=[\bby_{1,1}^T(t)\ldots\bby_{1,J}^T
(t)\:
\bby_{2,1}^T(t)\ldots\bby_{2,J}^T(t)]^T\in\mathbb{R}^{2Jp}$. In addition, to concisely capture the
 effects
of both observation and communication noise on the estimation errors
across the WSN, define the $Jp\times 1$ noise supervectors
$\bbepsilon(t):=\sum_{\tau=0}^{t}\lambda^{t-\tau}[\bbh_1^T(\tau)\epsilon_1(\tau)\ldots\bbh_J^T(\tau)\epsilon_J(\tau)]^T$
and $\bar{\bbeta}(t):=[\bar{\bbeta}_{1}^{T}(t)\ldots
\bar{\bbeta}_{J}^{T}(t)]^{T}$. Vectors
$\{\bar{\bbeta}_{j}(t)\}_{j=1}^{J}$ represent the aggregate noise
corrupting the multipliers received by sensor $j$ at time instant $t$, and are given by
\begin{equation}\label{baretajt_Ch4}
\bar{\bbeta}_{j}(t):=\frac{1}{2}
\sum_{j'\in\calN_j}\bar{\bbeta}_j^{j'}(t).
\end{equation}
Their respective covariance matrices are easily computable under
(a2)-(a3). For instance,
\begin{equation}\label{epsilon_cov}
\bbR_{\bbepsilon}(t):=E[\bbepsilon(t)\bbepsilon^{T}(t)]=\frac{1-\lambda^{2(t+1)}}{1-\lambda^2}
\textrm{bdiag}(\bbR_{h_1}\sigma_{\epsilon_1}^2,\ldots,\bbR_{h_J}\sigma_{\epsilon_J}^2)
\end{equation}
while the structure of
$\bbR_{\bar{\bbeta}}:=E[\bar{\bbeta}(t)\bar{\bbeta}^{T}(t)]$ is given in
Appendix E. Two additional $Jp\times 1$ communication noise
supervectors are needed, namely
$\bbeta_\alpha(t):=\left[(\bbeta_1^\alpha(t))^{T}\ldots
(\bbeta_J^\alpha(t))^{T}\right]^{T}$ and
$\bbeta_\beta(t):=\left[(\bbeta_1^\beta(t))^{T}\ldots(\bbeta_J^\beta(t))^{T}\right]^{T}$,
where for $j\in\calJ$
\begin{equation}\label{noises_alpha_beta}
\bbeta_j^\alpha(t):=\frac{c}{4}\sum_{j'\in\calN_j}\bbeta_{j}^{j'}(t),{\quad}
\bbeta_j^\beta(t):=\frac{c}{4}\sum_{j'\in\calN_j}\bbeta^{j}_{j'}(t).
\end{equation}
Finally, let $(c/2)\bbL\otimes\bbI_p\in\mathbb{R}^{Jp\times Jp}$ be a
matrix capturing the WSN connectivity pattern through the (scaled) graph
Laplacian matrix $\bbL$, and define
$\bbR_{h}^{-1}:=\textrm{bdiag}(\bbR_{h_1}^{-1},\ldots,\bbR_{h_J}^{-1})$.
Based on these definitions, it is possible to state the following
important lemma established in Appendix A.

\begin{lemma}\label{Lemma_1_Ch4}
Let (a1) and (a2) hold. Then for $t\geq t_0$ with $t_0$ sufficiently large
while $0\ll\lambda<1$, the global state $\bby(t)$ approximately evolves
according to
\begin{align}\label{ystate_DRLS}
\bby(t+1)=\mbox{bdiag}((1-\lambda)\bbR_{h}^{-1},\bbI_{Jp})&\left\{\bbUpsilon\bby(t)+\left[\begin{array}{c} 
\bbI_{Jp}\\
\mathbf{0}_{Jp\times Jp}
\end{array}\right]\bbepsilon(t+1)+\left[\begin{array}{c} \bbI_{Jp}\\
\mathbf{0}_{Jp\times Jp}
\end{array}\right]\bar{\bbeta}(t)\right.\nonumber\\
&\hspace{0.4cm}\left.
+\left[\begin{array}{c} \bbI_{Jp}\\
-\bbI_{Jp}
\end{array}\right]\bbeta_\alpha(t)-
\left[\begin{array}{c} \bbI_{Jp}\\
-\bbI_{Jp}
\end{array}\right]\bbeta_\beta(t)\right\}
\end{align}
where the $2Jp\times 2Jp$ matrix $\bbUpsilon$ consists of the $Jp\times
Jp$ blocks $[\bbUpsilon]_{11}=-[\bbUpsilon]_{21}=-\bbL_{c}$ and
$[\bbUpsilon]_{12}=-[\bbUpsilon]_{22}=-\bbI_{Jp}$. The initial condition
$\bby(t_0)$ should be selected as
$\bby(t_0)=\textrm{bdiag}(\bbI_{Jp},\bbL_c)\bby'(t_0)$, where $\bby'(t_0)$
is any vector in $\mathbb{R}^{2Jp}$.
\end{lemma}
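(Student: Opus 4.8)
The plan is to derive separate recursions for the two halves of the state, $\bby_1(t)=[\bby_{1,1}^T(t)\ldots\bby_{1,J}^T(t)]^T$ and $\bby_2(t)=[\bby_{2,1}^T(t)\ldots\bby_{2,J}^T(t)]^T$, and then stack them to read off $\bbUpsilon$ and the noise-injection matrices in \eqref{ystate_DRLS}. I would start with the multiplier (i.e.\ $\bby_2$) dynamics, which need no approximation. Writing the update \eqref{ST_Vj_Update_AMA} for both $\bbv_j^{j'}(t)$ and $\bbv_{j'}^j(t)$ and subtracting, the local-estimate terms combine as $\bbs_j(t)-\bbs_{j'}(t)=\bby_{1,j}(t)-\bby_{1,j'}(t)$ (the true $\bbs_0$ cancels), while the communication-noise terms assemble into $\bbeta_j^\beta(t)-\bbeta_j^\alpha(t)$ by the definitions \eqref{noises_alpha_beta}. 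Summing over $j'\in\calN_j$ and invoking the definition of $\bby_{2,j}$ yields $\bby_{2,j}(t+1)=\bby_{2,j}(t)+\frac{c}{2}\sum_{j'\in\calN_j}[\bby_{1,j}(t)-\bby_{1,j'}(t)]+\bbeta_j^\beta(t)-\bbeta_j^\alpha(t)$; stacking recognizes the neighbor-difference sum as the (scaled, Kronecker'd) Laplacian action $\bbL_c\bby_1(t)$, giving $\bby_2(t+1)=\bbL_c\bby_1(t)+\bby_2(t)-\bbeta_\alpha(t)+\bbeta_\beta(t)$, which is exactly the second block-row of \eqref{ystate_DRLS} and fixes $[\bbUpsilon]_{21}=\bbL_c$, $[\bbUpsilon]_{22}=\bbI_{Jp}$.

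Next I would treat the $\bby_1$ dynamics, where the three approximations enter. Starting from \eqref{ST_Sj_Update_AMA}, I would observe that $\frac{1}{2}\sum_{j'\in\calN_j}[\bbv_j^{j'}(t)-\bbv_{j'}^j(t)]$ equals $\bby_{2,j}(t+1)$, and substitute the just-derived $\bby_2$ recursion to re-express it via time-$t$ quantities, namely $[\bbL_c\bby_1(t)]_j+\bby_{2,j}(t)+\bbeta_j^\beta(t)-\bbeta_j^\alpha(t)$, while the received-multiplier noise $\bar{\bbeta}_j^{j'}(t)$ aggregates into $\bar{\bbeta}_j(t)$ through \eqref{baretajt_Ch4}. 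For the data term $\bbPhi_j^{-1}(t+1)\bbpsi_j(t+1)$, I would insert the linear model (a1) into \eqref{psi_update_AMA} to write $\bbpsi_j(t+1)=[\bbPhi_j(t+1)-J^{-1}\lambda^{t+1}\bbPhi_0]\bbs_0+\bbepsilon_j(t+1)$; dropping the regularization term, which decays like $\lambda^{t+1}$, gives $\bbPhi_j^{-1}(t+1)\bbpsi_j(t+1)\approx\bbs_0+\bbPhi_j^{-1}(t+1)\bbepsilon_j(t+1)$, so subtracting $\bbs_0$ produces the error $\bby_{1,j}(t+1)$. Finally, replacing $\bbPhi_j^{-1}(t+1)$ by the averaged surrogate $(1-\lambda)\bbR_{h_j}^{-1}$ from \eqref{invPhij_Averaged} (this is where (a2) is used, via \eqref{Phij_Averaged}) and collecting terms gives $\bby_1(t+1)=(1-\lambda)\bbR_h^{-1}\{-\bbL_c\bby_1(t)-\bby_2(t)+\bbepsilon(t+1)+\bar{\bbeta}(t)+\bbeta_\alpha(t)-\bbeta_\beta(t)\}$, matching the first block-row of \eqref{ystate_DRLS} and fixing $[\bbUpsilon]_{11}=-\bbL_c$, $[\bbUpsilon]_{12}=-\bbI_{Jp}$.

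For the initial-condition claim I would argue structurally rather than dynamically. By its very definition, each edge $(j,j')$ contributes $\bbv_j^{j'}-\bbv_{j'}^j$ to $\bby_{2,j}$ and its negative to $\bby_{2,j'}$, so $\sum_{j=1}^J\bby_{2,j}(t)=\mathbf{0}_p$ identically in $t$, i.e.\ $(\mathbf{1}_J^T\otimes\bbI_p)\bby_2(t)=\mathbf{0}_p$. Since the WSN graph is connected and $\bbL$ is symmetric, $\textrm{nullspace}(\bbL_c)=\textrm{range}(\mathbf{1}_J\otimes\bbI_p)$, whence $\bby_2(t)\in\textrm{range}(\bbL_c)$; one checks this subspace is invariant under \eqref{ystate_DRLS}, because both $\bbL_c\bby_1(t)$ and $\bbeta_\beta(t)-\bbeta_\alpha(t)$ (again by the edge-symmetric summation, which gives $\sum_j[\bbeta_j^\beta(t)-\bbeta_j^\alpha(t)]=\mathbf{0}_p$) lie in it. Writing any such $\bby_2(t_0)$ as $\bbL_c\bby_2'(t_0)$ while leaving $\bby_1(t_0)$ free then yields $\bby(t_0)=\textrm{bdiag}(\bbI_{Jp},\bbL_c)\bby'(t_0)$.

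I expect the main obstacle to be the bookkeeping in the $\bby_1$ recursion: carefully substituting the $\bby_{2,j}(t+1)$ expression back in terms of time-$t$ states while keeping the two communication-noise streams $\bbeta_\alpha,\bbeta_\beta$ and the received-multiplier noise $\bar{\bbeta}$ correctly separated and signed, and ensuring that the only approximations invoked are the dropped regularization term and the $E[\bbPhi_j^{-1}]\approx E^{-1}[\bbPhi_j]$ substitution, so that the exact algebraic recursion collapses into precisely the stated \emph{approximate} $\bby$-system rather than an extraneous variant.
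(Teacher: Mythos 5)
Your proposal is correct and follows essentially the same route as the paper's Appendix A: derive the exact per-sensor recursion for $\bby_{2,j}$ from the multiplier updates, back-substitute it into the averaged estimate update (with $\bbPhi_j^{-1}(t+1)\approx(1-\lambda)\bbR_{h_j}^{-1}$ and (a1) eliminating $\bbpsi_j$) to get the $\bby_{1,j}$ recursion, stack to identify $\bbUpsilon$ and the noise-injection blocks, and justify the initialization via the conservation law $\sum_{j}\bby_{2,j}(t)=\mathbf{0}_p$ together with the nullspace/range structure of $\bbL_c$. The only cosmetic differences are that you cancel $\bbs_0$ through $\bbPhi_j^{-1}(t+1)\bbPhi_j(t+1)$ exactly (dropping only the decaying regularizer) where the paper approximates the EWMA directly, and that you argue the initial-condition representation in the converse (and slightly stronger) direction, every admissible $\bby_2(t_0)$ lies in $\textrm{range}(\bbL_c)$, rather than merely checking that $\bbL_c\bby_2'(t_0)$ satisfies the conservation law.
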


The convenience of representing $\bby(t)$ as in Lemma \ref{Lemma_1_Ch4}
will become apparent in the sequel, especially when investigating sufficient conditions
under which the D-RLS algorithm is stable in the mean sense (Section \ref{ssec:DRLS_Mean_Stability}).
In addition, the covariance matrix of the state vector $\bby(t)$ can be shown
to encompass all the information needed to evaluate the relevant
per sensor and networkwide performance figures of merit, the subject dealt with
next.

% % % % % % % % % % % % % % % % % % % % % % % % % % % % % % % % % % % % % % % %
%                         Subsection III-C                                    %
% % % % % % % % % % % % % % % % % % % % % % % % % % % % % % % % % % % % % % % %

\subsection{Performance Metrics}
\label{ssec:fig_merit} When it comes to performance evaluation of adaptive
algorithms, it is customary to consider as figures of merit the so-called
MSE, excess mean-square error (EMSE), and mean-square deviation (MSD)
~\cite{Sayed_Adaptive_Book},~\cite{Solo_Adaptive_Book}. In the present
setup for distributed adaptive estimation, it is pertinent to address both
global (network-wide) and local (per-sensor)
performance~\cite{Lopes_Sayed_TSP_Diffusion_LMS}. After recalling the
definitions of the local a priori error $e_{j}(t):=
x_{j}(t)-\bbh_{j}^{T}(t)\bbs_{j}(t-1)$ and local estimation error
$\bby_{1,j}(t):=\bbs_j(t)-\bbs_0$, the per-sensor performance metrics
are defined as
\begin{align}
\textrm{MSE}_j(t)&:=E[e_j^2(t)]\nonumber\\
\textrm{EMSE}_j(t)&:=E[(\bbh_j^{T}(t)\bby_{1,j}(t-1))^2]\nonumber\\
\textrm{MSD}_j(t)&:=E[\|\bby_{1,j}(t)\|^2]\nonumber
\end{align}
whereas their global counterparts are defined as the respective averages
across sensors, e.g., $\textrm{MSE}(t):=J^{-1}\sum_{j=1}^{J}E[e_j(t)^2]$,
and so on. 

Next, it is shown that it suffices to evaluate the state covariance matrix
$\bbR_{y}(t):=E[\bby(t)\bby^{T}(t)]$ in order to assess the aforementioned
performance metrics. To this end, note that by virtue of (a1) it is
possible to write $e_j(t)=-\bbh_j^{T}(t)\bby_{1,j}(t-1)+\epsilon_j(t)$.
Because $\bby_{1,j}(t-1)$ is independent of the zero-mean
$\{\bbh_j(t),\epsilon_j(t)\}$ under (a1)-(a3), from the previous
relationship between the a priori and estimation errors one finds that
$\textrm{MSE}_j(t)=\textrm{EMSE}_j(t)+\sigma_{\epsilon_j}^2$. Hence, it
suffices to focus on the evaluation of $\textrm{EMSE}_j(t)$, through which
$\textrm{MSE}_j(t)$ can also be determined under the assumption that the 
observation noise variances are known, or can be estimated for that matter. If 
$\bbR_{y_{1,j}}(t):=
E[\bby_{1,j}(t)\bby_{1,j}^{T}(t)]$ denotes the $j$-th local error
covariance matrix, then
$\textrm{MSD}_j(t)=\textrm{tr}(\bbR_{y_{1,j}}(t))$; and under (a1)-(a3), a
simple manipulation yields
%$\textrm{EMSE}_j(t)=\textrm{tr}(\bbR_{h_j}\bbR_{y_{1,j}}(t-1)).$
%
\begin{align}
\textrm{EMSE}_j(t)&=E[\textrm{tr}((\bbh_j^{T}(t)\bby_{1,j}(t-1))^2)]
=\textrm{tr}(E[\bbh_j(t)\bbh_j^{T}(t)\bby_{1,j}(t-1)\bby_{1,j}^{T}(t-1)])\nonumber\\
&=\textrm{tr}(E[\bbh_j(t)\bbh_j^{T}(t)]E[\bby_{1,j}(t-1)\bby_{1,j}^{T}(t-1)])=
\textrm{tr}(\bbR_{h_j}\bbR_{y_{1,j}}(t-1)).\nonumber
\end{align}
To derive corresponding formulas for the global performance figures of merit, 
let $\bbR_{y_1}(t):=E[\bby_1(t)\bby_1^{T}(t)]$ denote the global error 
covariance matrix, and define
$\bbR_h:=E[\bbR_h(t)]=\textrm{bdiag}(\bbR_{h_1},\ldots,\bbR_{h_J})$. It follows
 that $\textrm{MSD}(t)=J^{-1}
\textrm{tr}(\bbR_{y_1}(t))$, and $\textrm{EMSE}(t)=
J^{-1}\textrm{tr}(\bbR_h\bbR_{y_1}(t-1))$.

It is now straightforward to recognize that $\bbR_y(t)$ indeed provides all 
the information needed to evaluate the performance of the D-RLS algorithm. For 
instance, observe that the global error
 covariance matrix 
$\bbR_{y_1}(t)$ corresponds to the
$Jp\times Jp$ upper left submatrix of $\bbR_{y}(t)$, which is denoted by 
$[\bbR_{y}(t)]_{11}$.
Further, the $j$-th $p\times p$ diagonal submatrix ($j=1,\ldots,J$)
of $[\bbR_{y}(t)]_{11}$ is exactly $\bbR_{y_{1,j}}(t)$, and is likewise denoted
 by $[\bbR_{y}(t)]_{11,j}$. For clarity, the aforementioned notational 
conventions regarding submatrices within $\bbR_y(t)$ are 
illustrated in Fig. \ref{blocks}.  In a nutshell, deriving a
closed-form expression for $\bbR_{y}(t)$ enables the evaluation of all
performance metrics of interest, as summarized in Table 
\ref{table:figures_of_merit}. This task will be
considered in 
Section \ref{ssec:SS_Perf}.

\begin{table}[t]
\renewcommand{\arraystretch}{1.3}
\caption{Evaluation of local and global figures of merit from $\bbR_y(t)$}
\label{table:figures_of_merit} \centering
\begin{tabular}{|c||c|c|c|}
\hline
 & \bfseries MSD &  \bfseries EMSE & \bfseries MSE \\
\hline\hline
\bfseries Local & $\textrm{tr}([\bbR_{y}(t)]_{11,j})$ & 
$\textrm{tr}(\bbR_{h_j}[\bbR_{y}(t-1)]_{11,j})$  & 
$\textrm{tr}(\bbR_{h_j}[\bbR_{y}(t-1)]_{11,j})+\sigma_{\epsilon_j}^2$ \\
\hline
\bfseries Global & $J^{-1}\textrm{tr}([\bbR_y(t)]_{11})$ & 
$J^{-1}\textrm{tr}(\bbR_{h}[\bbR_{y}(t-1)]_{11})$  &  
$J^{-1}\textrm{tr}(\bbR_{h}[\bbR_{y}(t-1)]_{11})+J^{-1}\sum_{j=1}^J\sigma_{
\epsilon_j}^2$\\
\hline
\end{tabular}
\end{table}

\begin{remark}
\normalfont Since the `average' system representation of $\bby(t)$ in \eqref{ystate_DRLS}
relies on an approximation that becomes increasingly accurate as $\lambda\to 1$ and $t\to\infty$,
so does the covariance recursion for $\bbR_y(t)$ derived in Section \ref{ssec:SS_Perf}.
For this reason, the scope of the MSE performance analysis of this paper pertains to 
the \textit{steady-state} behavior of the D-RLS algorithm.
\end{remark}

% % % % % % % % % % % % % % % % % % % % % % % % % % % % % % % % % % % % % % % %
%                         Section IV                                          %
% % % % % % % % % % % % % % % % % % % % % % % % % % % % % % % % % % % % % % % %

\section{Stability and Steady-State Performance Analysis}
\label{sec:Stability_Perf_Analysis}

In this section, stability and steady-state performance analyses are
conducted for the D-RLS algorithm developed in Section
\ref{ssec:AMA_DRLS}. Because recursions 
\eqref{ST_Vj_Update_AMA}-\eqref{ST_Sj_Update_AMA} are stochastic in nature, 
stability will be assessed both in the mean- and in the MSE-sense. The 
techniques presented here can be utilized with
minimal modifications to derive analogous results for the AD-MoM-based D-RLS 
algorithm in~\cite{Gonzalo_Yannis_GG_DRLS}.

% % % % % % % % % % % % % % % % % % % % % % % % % % % % % % % % % % % % % % % %
%                         Subsection IV-A                                     %
% % % % % % % % % % % % % % % % % % % % % % % % % % % % % % % % % % % % % % % %

\subsection{Mean Stability}
\label{ssec:DRLS_Mean_Stability}

Based on Lemma \ref{Lemma_1_Ch4}, it follows that D-RLS achieves consensus
in the mean sense on the parameter $\bbs_0$. 

\begin{proposition}\label{Proposition_3_Ch4}
Under (a1)-(a3) and for $0\ll\lambda<1$, D-RLS  achieves
consensus in the mean, i.e.,
\begin{equation*}
\lim_{t\to\infty}E[\bby_{1,j}(t)]=\mathbf{0}_p,{\quad}\forall\:j\in\mathcal{J}
\end{equation*}
provided the penalty coefficient is chosen such that
\begin{equation}\label{cbound}
0<c<\frac{4}{(1-\lambda)\lambda_{\max}(\bbR_{h}^{-1}(\bbL\otimes\bbI_p))}.
\end{equation}
\end{proposition}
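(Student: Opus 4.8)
The plan is to pass to expectations in the averaged state recursion of Lemma~\ref{Lemma_1_Ch4} and reduce the claim to the asymptotic stability of a single deterministic linear recursion. Under (a1)--(a3) the four driving noise supervectors $\bbepsilon(t+1)$, $\bar{\bbeta}(t)$, $\bbeta_\alpha(t)$ and $\bbeta_\beta(t)$ are zero-mean, and in \eqref{ystate_DRLS} they enter additively with \emph{deterministic} coefficient matrices (the randomness of $\bbPhi_j^{-1}$ having already been averaged out). Taking $E[\cdot]$ on both sides of \eqref{ystate_DRLS} therefore kills every noise term and leaves the autonomous recursion $E[\bby(t+1)]=\bbB\,E[\bby(t)]$, where $\bbB:=\mbox{bdiag}((1-\lambda)\bbR_h^{-1},\bbI_{Jp})\,\bbUpsilon$. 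Writing $\bbL_c:=(c/2)(\bbL\otimes\bbI_p)$ and $\bbA:=(1-\lambda)\bbR_h^{-1}\bbL_c$, the block form is
\begin{equation*}
\bbB=\left[\begin{array}{cc}-\bbA & -(1-\lambda)\bbR_h^{-1}\\ \bbL_c & \bbI_{Jp}\end{array}\right],
\end{equation*}
so that $E[\bby(t)]=\bbB^{\,t-t_0}E[\bby(t_0)]$ and everything hinges on the spectrum of $\bbB$ together with the prescribed initial condition.

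Next I would compute the eigenvalues of $\bbB$ by relating them to those of $\bbA$. Since $\bbA$ is similar to the symmetric positive-semidefinite matrix $(1-\lambda)\bbR_h^{-1/2}\bbL_c\bbR_h^{-1/2}$, its eigenvalues $\nu$ are real and nonnegative, the zero eigenvalues (of multiplicity $p$) corresponding to the consensus subspace $\textrm{nullspace}(\bbL_c)=\textrm{span}\{\mathbf{1}_J\otimes\bbb_{p,i}\}_{i=1}^{p}$. Substituting in the eigenvector equations for $\bbB$ shows that any eigenvalue $\mu\notin\{0,1\}$ forces its top block $\bbu$ to satisfy $\bbA\bbu=(1-\mu)\bbu$; hence $\mu=1-\nu$ for some positive eigenvalue $\nu$ of $\bbA$. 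Conversely $\mu=0$ contributes a kernel of dimension $Jp$ (take $\bbw=-\bbL_c\bbu$ for arbitrary $\bbu$), and $\mu=1$ an eigenspace $\{(\bbu,-\tfrac{1}{1-\lambda}\bbR_h\bbu):\bbu\in\textrm{nullspace}(\bbL_c)\}$ of dimension $p$; these multiplicities sum to $2Jp$, so the list is complete and $\mu=1$ is semisimple. The bound \eqref{cbound} is \emph{exactly} the condition $\lambda_{\max}(\bbA)=(1-\lambda)(c/2)\lambda_{\max}(\bbR_h^{-1}(\bbL\otimes\bbI_p))<2$, which forces $1-\nu\in(-1,1)$ for every positive $\nu$. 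Thus, under \eqref{cbound}, every eigenvalue of $\bbB$ lies strictly inside the unit disk except the $p$-fold eigenvalue $\mu=1$.

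The crux---and the step I expect to be the main obstacle---is disposing of the marginal eigenvalue $\mu=1$, whose right eigenvectors have a nonzero (consensus) top block and would otherwise leave a persistent bias in $E[\bby_1(t)]$. I would resolve this through the structure of the admissible initial condition. Solving the left-eigenvector equations for $\mu=1$ gives vectors of the form $(\mathbf{0}_{Jp}^T,\bbr^T)$ with $\bbr\in\textrm{nullspace}(\bbL_c)$; consequently the $\mu=1$ spectral component of $E[\bby(t_0)]$ is governed solely by the inner products $\bbr^TE[\bby_2(t_0)]$. By hypothesis $E[\bby_2(t_0)]=\bby_2(t_0)=\bbL_c\bby_2'(t_0)\in\textrm{range}(\bbL_c)$, and since $\bbL_c$ is symmetric, $\textrm{range}(\bbL_c)=\textrm{nullspace}(\bbL_c)^\perp$; hence $\bbr^T\bby_2(t_0)=0$ and the $\mu=1$ mode is \emph{never excited}, irrespective of the free block $\bby_1(t_0)$. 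Decomposing $E[\bby(t_0)]$ along the invariant subspaces of $\bbB$, the $\mu=0$ generalized eigenspace is annihilated after finitely many steps, the $\mu=1$ component vanishes by the argument just given, and the remaining components decay geometrically because they correspond to $|\mu|<1$. Therefore $E[\bby(t)]=\bbB^{\,t-t_0}E[\bby(t_0)]\to\mathbf{0}_{2Jp}$, and reading off the top block yields $\lim_{t\to\infty}E[\bby_{1,j}(t)]=\mathbf{0}_p$ for every $j\in\calJ$, as claimed.
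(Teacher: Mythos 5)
Your proof is correct and takes essentially the same route as the paper: pass to expectations in the averaged recursion of Lemma~\ref{Lemma_1_Ch4}, characterize the spectrum of $\bbOmega:=\mbox{bdiag}((1-\lambda)\bbR_{h}^{-1},\bbI_{Jp})\bbUpsilon$ (your spectral facts are exactly the paper's Lemma~\ref{Lemma_choose_c}: $p$ semisimple unit eigenvalues whose left eigenvectors have the form $[\mathbf{0}_{1\times Jp}\;\;\bbr^{T}]$ with $\bbr\in\textrm{nullspace}(\bbL_c)$, and all remaining eigenvalues equal to zero or of modulus strictly less than one under \eqref{cbound}), and then annihilate the unit-eigenvalue mode via the prescribed initialization $\bby_2(t_0)=\bbL_c\bby_2'(t_0)\in\textrm{range}(\bbL_c)=\textrm{nullspace}(\bbL_c)^{\perp}$. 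The only substantive difference is internal to that spectral lemma: you obtain the non-unit eigenvalues exactly as $1-\nu$ with $\nu>0$ an eigenvalue of $(1-\lambda)\bbR_h^{-1}\bbL_c$ (via similarity to a symmetric positive-semidefinite matrix), which is more self-contained than the paper's modulus bound imported from an external second-order-polynomial argument.
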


\begin{IEEEproof}
Based on (a1)-(a3) and since the data is
zero-mean, one obtains after taking expectations on \eqref{ystate_DRLS}
that
$E[\bby(t)]=\mbox{bdiag}((1-\lambda)\bbR_{h}^{-1},\bbI_{Jp})\bbUpsilon
E[\bby(t-1)]$. The following lemma characterizes the spectrum of the
transition matrix
$\bbOmega:=\mbox{bdiag}((1-\lambda)\bbR_{h}^{-1},\bbI_{Jp})\bbUpsilon$; see 
Appendix B for a proof.

\begin{lemma}\label{Lemma_choose_c}
Regardless of the value of $c>0$, matrix
$\bbOmega:=\mbox{bdiag}((1-\lambda)\bbR_{h}^{-1},\bbI_{Jp})\bbUpsilon\in\mathbb
{R}^{2Jp\times
2Jp}$ has $p$ eigenvalues equal to one. Further, the left eigenvectors
associated with the unity eigenvalue have the structure $\bbv_i^{T}=\left[
\mathbf{0}_{1\times Jp}\:\: \bbq_i^{T}\right]$, where
$\bbq_i\in\textrm{nullspace}(\bbL_c)$ and $i=1,\ldots,p$. The remaining
eigenvalues are equal to zero, or else have modulus strictly smaller than
one provided $c$ satisfies the bound \eqref{cbound}.
\end{lemma}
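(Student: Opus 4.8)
The plan is to compute the characteristic polynomial of $\bbOmega$ in closed form through a Schur-complement reduction of its natural $2\times 2$ block structure, and then simply read off the spectrum. Writing $\bbD:=(1-\lambda)\bbR_{h}^{-1}\succ\mathbf{0}_{Jp\times Jp}$ and recalling $\bbL_c=(c/2)\bbL\otimes\bbI_p\succeq\mathbf{0}_{Jp\times Jp}$, the block definitions of $\bbUpsilon$ give
\begin{equation*}
\bbOmega=\left[\begin{array}{cc}-\bbD\bbL_c & -\bbD\\ \bbL_c & \bbI_{Jp}\end{array}\right].
\end{equation*}

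First I would invoke the block-determinant identity $\det\left(\left[\begin{array}{cc}\bbA & \bbB\\ \bbC & \bbE\end{array}\right]\right)=\det(\bbE)\det(\bbA-\bbB\bbE^{-1}\bbC)$, taking $\bbE=(1-\mu)\bbI_{Jp}$, which is invertible whenever $\mu\neq 1$. A short computation collapses the Schur complement to $\tfrac{\mu}{1-\mu}\left(\bbD\bbL_c-(1-\mu)\bbI_{Jp}\right)$, and after cancelling the $(1-\mu)^{Jp}$ factors one is left with the clean identity
\begin{equation*}
\det(\bbOmega-\mu\bbI_{2Jp})=\mu^{Jp}\det\left(\bbD\bbL_c-(1-\mu)\bbI_{Jp}\right),
\end{equation*}
which, both sides being polynomials in $\mu$, extends to all $\mu$ by continuity. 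This exhibits $Jp$ eigenvalues at $\mu=0$, while the remaining eigenvalues are $\mu=1-\nu$ as $\nu$ ranges over the eigenvalues of $\bbD\bbL_c$.

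Next I would analyze the spectrum of $\bbD\bbL_c$. Since $\bbD\succ\mathbf{0}_{Jp\times Jp}$, the matrix $\bbD\bbL_c$ is similar to the symmetric positive-semidefinite matrix $\bbD^{1/2}\bbL_c\bbD^{1/2}$, so every $\nu$ is real and nonnegative; moreover, because $\bbD^{1/2}$ is invertible, the multiplicity of $\nu=0$ equals $\dim(\textrm{nullspace}(\bbL_c))$. For a connected graph $\textrm{nullspace}(\bbL)$ is one-dimensional, so the Kronecker factor $\bbI_p$ yields $\dim(\textrm{nullspace}(\bbL_c))=p$. Hence exactly $p$ of the $\nu$ vanish, each contributing an eigenvalue $\mu=1$ and accounting for the claimed $p$ unit eigenvalues, while the nonzero $\nu>0$ contribute $\mu=1-\nu<1$. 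The requirement $|\mu|<1$ is equivalent to $\nu<2$, and since the largest such eigenvalue is $\lambda_{\max}(\bbD\bbL_c)=\tfrac{c(1-\lambda)}{2}\lambda_{\max}(\bbR_{h}^{-1}(\bbL\otimes\bbI_p))$, forcing it below $2$ is precisely the bound \eqref{cbound}.

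Finally, for the left eigenvectors at $\mu=1$ I would verify the stated structure directly: for $\bbv^{T}=[\mathbf{0}_{1\times Jp}\;\;\bbq^{T}]$ the block product gives $\bbv^{T}\bbOmega=[\bbq^{T}\bbL_c\;\;\bbq^{T}]$, which equals $\bbv^{T}$ exactly when $\bbL_c\bbq=\mathbf{0}_{Jp}$ (using $\bbL_c=\bbL_c^{T}$), i.e.\ $\bbq\in\textrm{nullspace}(\bbL_c)$, a $p$-dimensional space matching the $p$ unit eigenvalues. The main obstacle I anticipate is the careful handling of the degenerate cases $\mu\in\{0,1\}$ in the Schur step, where $\bbE=(1-\mu)\bbI_{Jp}$ is singular and the determinant identity must be justified by polynomial continuation rather than by direct substitution; a secondary point is confirming that the multiplicity of $\mu=1$ is exactly $p$ and no larger, which follows from the rank preservation of $\bbL_c$ under the congruence by the invertible $\bbD^{1/2}$.
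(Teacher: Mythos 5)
Your proof is correct, and it takes a genuinely different route from the paper's. The paper argues directly on the left-eigenvector equations at eigenvalue one: the second block row $-\bbv_{1,i}^{T}(1-\lambda)\bbR_h^{-1}+\bbv_{2,i}^{T}=\bbv_{2,i}^{T}$ forces $\bbv_{1,i}=\mathbf{0}_{Jp}$, the first row then yields $\bbv_{2,i}\in\textrm{nullspace}(\bbL_c)$ (a $p$-dimensional subspace by connectivity), and for the remaining eigenvalues it defers to steps from an external reference, reducing them to roots of a second-order polynomial with no constant term and bounding the moduli of the nonzero roots to reach \eqref{cbound}. You instead compute the characteristic polynomial in closed form via the Schur complement, $\det(\bbOmega-\mu\bbI_{2Jp})=\mu^{Jp}\det\bigl(\bbD\bbL_c-(1-\mu)\bbI_{Jp}\bigr)$, and then settle everything through the similarity $\bbD\bbL_c\sim\bbD^{1/2}\bbL_c\bbD^{1/2}$. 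This buys you what the paper leaves implicit or outsourced: the whole spectrum is exhibited explicitly (all eigenvalues real, at least $Jp$ of them zero, the rest of the form $1-\nu$ with $\nu\geq 0$), the multiplicity of the unit eigenvalue is pinned down as exactly $p$ rather than merely at least $p$, and \eqref{cbound} drops out transparently from the requirement $\nu<2$ with no modulus analysis of possibly complex roots; it also clarifies that the relevant stability quantity is $\max_{\nu>0}|1-\nu|$ over the positive eigenvalues of $\bbD\bbL_c$, which is cleaner than the paper's literal bound $\lambda_{\max}(\bbI_{Jp}-(1-\lambda)\bbR_h^{-1}\bbL_c)$ (read as a spectral radius, that quantity includes the unit eigenvalue itself). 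Your closing verification that vectors $[\mathbf{0}_{1\times Jp}\;\;\bbq^{T}]$ with $\bbq\in\textrm{nullspace}(\bbL_c)$ are left eigenvectors, combined with algebraic multiplicity $p$, does recover the paper's structural claim, since the exhibited $p$-dimensional space must exhaust the left eigenspace (geometric multiplicity cannot exceed algebraic); and your polynomial-continuation handling of the singular cases $\mu\in\{0,1\}$ in the Schur step is exactly the right justification. In short: the paper is quicker on the eigenvector structure but leans on external steps for the stability bound, whereas your argument is fully self-contained.
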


Back to establishing the mean stability result, let $\{\bbu_i\}$ and
$\{\bbv_i^{T}\}$ respectively denote the collection of $p$ right and left
eigenvectors of $\bbOmega$ associated with the eigenvalue one. By virtue
of Lemma \ref{Lemma_choose_c} and provided $c$ satisfies the bound \eqref{cbound}, 
one has that $\lim_{t\to\infty}\bbOmega^{t}=\sum_{i=1}^{p}\bbu_i\bbv_i^{T}$; hence,
\begin{align}
\lim_{t\to\infty}E[\bby(t)]=&\left(\sum_{i=1}^{p}\bbu_i\bbv_i^{T}\right)
\bby(t_0)=\left(\sum_{i=1}^{p}\bbu_i\bbv_i^{T}\right)\textrm{bdiag}(\bbI_{Jp},
\bbL_c)\bby'(t_0)\nonumber\\
=&\left(\sum_{i=1}^{p}\bbu_i\left[\mathbf{0}_{1\times 
Jp}\:\:\bbq_i^{T}\bbL_c\right]\right)\bby'(t_0)
=\mathbf{0}_{2Jp}.\nonumber
\end{align}
In obtaining the second equality, the structure for $\bby(t_0)$
that is given in Lemma \ref{Lemma_1_Ch4} was used. The last equality follows from
the fact that $\bbq_i\in\textrm{nullspace}(\bbL_c)$ as per Lemma
\ref{Lemma_choose_c}, thus completing the proof.
\end{IEEEproof}

Before wrapping up this section, a comment is due on the sufficient
condition \eqref{cbound}. When performing distributed estimation under
$0\ll\lambda<1$, the condition is actually not restrictive at all since a
$1-\lambda$ factor is present in the denominator. When $\lambda$ is close
to one, any practical choice of $c>0$ will result in asymptotically
unbiased sensor estimates. Also note that the bound depends on the WSN 
topology, through the scaled graph Laplacian matrix $\bbL_c$.

% % % % % % % % % % % % % % % % % % % % % % % % % % % % % % % % % % % % % % % %
%                         Subsection IV-B                                     %
% % % % % % % % % % % % % % % % % % % % % % % % % % % % % % % % % % % % % % % %

\subsection{MSE Stability and Steady-State Performance }
\label{ssec:SS_Perf}

In order to assess the steady-state MSE performance
of the D-RLS algorithm, we will evaluate the figures of merit introduced
in Section \ref{ssec:fig_merit}. The limiting values of both the
local (per sensor) and global (network-wide) MSE, excess mean-square error
(EMSE), and mean-square deviation (MSD), will be assessed. To this end, it
suffices to derive a closed-form expression for the global estimation
error covariance matrix $\bbR_{y_1}(t):=E[\bby_1(t)\bby_1^{T}(t)]$, as
already argued in Section \ref{ssec:fig_merit}.

The next result provides an equivalent representation of the approximate
D-RLS global recursion \eqref{ystate_DRLS}, that is more suitable for the
recursive evaluation of $\bbR_{y_1}(t)$. First, introduce the
$p(\textstyle\sum_{j=1}^J|\calN_j|)\times 1$  vector
\begin{equation}\label{noisevector_Ch4}
\bbeta(t):=\left[\{(\bbeta_{j'}^{1}(t))^T\}_{j'\in\calN_{1}}\ldots
\{(\bbeta_{j'}^{J}(t))^T\}_{j'\in\calN_{J}}\right]^T
\end{equation}
which comprises the receiver noise terms corrupting transmissions of local
estimates across the whole network at time instant $t$, and define
$\bbR_{\bbeta}:=E[\bbeta(t)\bbeta^{T}(t)]$. For notational convenience,
let $\bbR_{h,\lambda}^{-1}:=(1-\lambda)\bbR_{h}^{-1}$.

\begin{lemma}\label{Lemma_2_Ch4}
Under the assumptions of Lemma \ref{Lemma_1_Ch4}, the global state
$\bby(t)$ in \eqref{ystate_DRLS} can be equivalently written as
\begin{equation}\label{D_RLS_System_Rec_2}
\bby(t+1)=\textrm{bdiag}(\bbI_{Jp},\bbL_c)
\bbz(t+1)+\left[\begin{array}{c} \bbR_{h,\lambda}^{-1}\\
\mathbf{0}_{Jp\times Jp}
\end{array}\right]\bar{\bbeta}(t)
+\left[\begin{array}{c} \bbR_{h,\lambda}^{-1}(\bbP_{\alpha}-\bbP_{\beta})\\
\bbP_{\beta}-\bbP_{\alpha}
\end{array}\right]{\bbeta}(t).
\end{equation}
The inner state $\mathbf{z}(t):=[\bbz_1^T(t)\;\bbz_2^T(t)]^T$ is
arbitrarily initialized at time $t_0$, and updated according to
\begin{equation}\label{zstate_Ch4}
\bbz(t+1)=\bbPsi\bbz(t)+\bbPsi\left[\begin{array}{c} \bbR_{h,\lambda}^{-1}(\bbP_{\alpha}-\bbP_{\beta})\\
\bbC
\end{array}\right]\bbeta(t-1)+\bbPsi\left[\begin{array}{c} \bbR_{h,\lambda}^{-1}\\
\mathbf{0}_{Jp\times Jp}
\end{array}\right]\bar{\bbeta}(t-1)
+\left[\begin{array}{c} \bbR_{h,\lambda}^{-1}\\
\mathbf{0}_{Jp\times Jp}
\end{array}\right]\bbepsilon(t+1)
\end{equation}
where the $2Jp\times 2Jp$ transition matrix $\bbPsi$ consists of the blocks
$[\bbPsi]_{11}=[\bbPsi]_{12}=-\bbR_{h,\lambda}^{-1}\bbL_c$ and
$[\bbPsi]_{21}=[\bbPsi]_{22}=\bbL_c\bbL_c^{\dagger}$. Matrix $\bbC$ is
chosen such that $\bbL_c\bbC=\bbP_{\beta}-\bbP_{\alpha}$, where the
structure of the time-invariant matrices $\bbP_{\alpha}$ and
$\bbP_{\beta}$ is given in Appendix E.
\end{lemma}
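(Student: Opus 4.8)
The plan is to obtain \eqref{D_RLS_System_Rec_2}--\eqref{zstate_Ch4} as a state-space change of variables applied to the recursion \eqref{ystate_DRLS} of Lemma \ref{Lemma_1_Ch4}; the idea is to split off the part of the communication noise that feeds through to $\bby(t+1)$ unfiltered, so that the remaining ``inner'' state $\bbz(t)$ propagates through a reduced transition matrix $\bbPsi$. First I would rewrite \eqref{ystate_DRLS} in compact driven-linear form. Using $\bbeta_\alpha(t)-\bbeta_\beta(t)=(\bbP_\alpha-\bbP_\beta)\bbeta(t)$ for the selection matrices $\bbP_\alpha,\bbP_\beta$ of Appendix E [cf. \eqref{noises_alpha_beta}], and folding the leading $\textrm{bdiag}(\bbR_{h,\lambda}^{-1},\bbI_{Jp})$ into $\bbUpsilon$, the recursion \eqref{ystate_DRLS} reads
\[
\bby(t+1)=\bbOmega\,\bby(t)+\left[\begin{array}{c}\bbR_{h,\lambda}^{-1}\\ \mathbf{0}_{Jp\times Jp}\end{array}\right]\bbepsilon(t+1)+\bbd(t),
\]
where $\bbOmega:=\textrm{bdiag}(\bbR_{h,\lambda}^{-1},\bbI_{Jp})\bbUpsilon$ is the transition matrix of Lemma \ref{Lemma_choose_c}, and $\bbd(t)$ denotes precisely the last two (noise) summands on the right-hand side of \eqref{D_RLS_System_Rec_2}. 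Setting $\bbT:=\textrm{bdiag}(\bbI_{Jp},\bbL_c)$, I would then try the ansatz $\bby(t)=\bbT\,\bbz(t)+\bbd(t-1)$, which is just \eqref{D_RLS_System_Rec_2} read one step earlier.

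The argument then rests on two algebraic identities, which I would establish before substituting. The central one is the intertwining relation $\bbOmega\bbT=\bbT\bbPsi$. Carrying out the block products, the $(1,1)$ and $(1,2)$ blocks agree term-by-term, while the $(2,1)$ and $(2,2)$ blocks reduce the claim to the single matrix identity $\bbL_c\bbL_c\bbL_c^{\dagger}=\bbL_c$. Since the WSN graph is undirected, $\bbL_c$ is symmetric, whence $\bbL_c\bbL_c^{\dagger}=\bbL_c^{\dagger}\bbL_c$ and therefore $\bbL_c\bbL_c\bbL_c^{\dagger}=\bbL_c\bbL_c^{\dagger}\bbL_c=\bbL_c$ by the defining Moore--Penrose property. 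I expect this to be the main obstacle, since it is the step that forces the particular lower blocks $\bbL_c\bbL_c^{\dagger}$ of $\bbPsi$ (in place of $\bbI_{Jp}$) and thereby dictates the precise form of the reduced dynamics. The second identity is the feed-through factorization $\bbd(t)=\bbT\,\bbg(t)$, where
\[
\bbg(t):=\left[\begin{array}{c}\bbR_{h,\lambda}^{-1}(\bbP_\alpha-\bbP_\beta)\\ \bbC\end{array}\right]\bbeta(t)+\left[\begin{array}{c}\bbR_{h,\lambda}^{-1}\\ \mathbf{0}_{Jp\times Jp}\end{array}\right]\bar{\bbeta}(t);
\]
this holds as soon as $\bbC$ is chosen with $\bbL_c\bbC=\bbP_\beta-\bbP_\alpha$, because then the lower block of $\bbT\,\bbg(t)$ collapses to $\bbL_c\bbC\,\bbeta(t)=(\bbP_\beta-\bbP_\alpha)\bbeta(t)$, matching $\bbd(t)$. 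Solvability of $\bbL_c\bbC=\bbP_\beta-\bbP_\alpha$ amounts to $\textrm{range}(\bbP_\beta-\bbP_\alpha)\subseteq\textrm{range}(\bbL_c)$, which I would verify from the explicit structure of $\bbP_\alpha$ and $\bbP_\beta$ recorded in Appendix E.

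With both identities in hand I would substitute the ansatz into the compact recursion. Replacing $\bby(t)$ by $\bbT\,\bbz(t)+\bbd(t-1)$ and $\bbd(t-1)$ by $\bbT\,\bbg(t-1)$ yields
\[
\bbT\,\bbz(t+1)=\bbOmega\bbT\,\bbz(t)+\bbOmega\bbT\,\bbg(t-1)+\left[\begin{array}{c}\bbR_{h,\lambda}^{-1}\\ \mathbf{0}_{Jp\times Jp}\end{array}\right]\bbepsilon(t+1).
\]
Now $\bbOmega\bbT=\bbT\bbPsi$, and since $\textrm{bdiag}(\bbI_{Jp},\bbL_c)$ fixes any column with vanishing lower block, the $\bbepsilon$-input column equals $\bbT$ times itself; a common factor $\bbT$ can therefore be pulled out of the whole right-hand side, leaving exactly $\bbT$ times the right-hand side of \eqref{zstate_Ch4}. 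Hence defining $\bbz(t)$ through \eqref{zstate_Ch4} makes $\bby(t)=\bbT\,\bbz(t)+\bbd(t-1)$ satisfy \eqref{ystate_DRLS}; by uniqueness of the solution of the driven linear recursion for a given initial value, the two representations agree for all $t\ge t_0$. It remains to match initial conditions: the lower block of $\bbd(t_0-1)$ is $\bbL_c\bbC\,\bbeta(t_0-1)\in\textrm{range}(\bbL_c)$, so $\bbd(t_0-1)\in\textrm{range}(\bbT)$, and the prescribed $\bby(t_0)=\textrm{bdiag}(\bbI_{Jp},\bbL_c)\bby'(t_0)$ of Lemma \ref{Lemma_1_Ch4} can be reproduced by a suitable --- and otherwise free --- choice of $\bbz(t_0)$, as asserted.
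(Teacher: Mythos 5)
Your proof is correct and takes essentially the same route as the paper's Appendix C proof: an induction on the change of variables $\bby(t)=\textrm{bdiag}(\bbI_{Jp},\bbL_c)\,\bbz(t)+\bbd(t-1)$, resting on the same two facts (the intertwining relation and the existence of $\bbC$), and your identity $\bbOmega\bbT=\bbT\bbPsi$, proved via $\bbL_c\bbL_c\bbL_c^{\dagger}=\bbL_c$, is in fact the corrected form of the paper's identity (i), which as printed ($\textrm{bdiag}(\bbI_{Jp},\bbL_c)\bbPsi=\bbUpsilon\,\textrm{bdiag}(\bbI_{Jp},\bbL_c)$) omits the factor $\textrm{bdiag}(\bbR_{h,\lambda}^{-1},\bbI_{Jp})$. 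The one step you defer, $\textrm{range}(\bbP_{\beta}-\bbP_{\alpha})\subseteq\textrm{range}(\bbL_c)$, is exactly what the paper establishes by checking $\textrm{nullspace}(\bbL_c)\subseteq\textrm{nullspace}(\bbP_{\beta}^{T}-\bbP_{\alpha}^{T})$ and invoking the symmetry of $\bbL_c$, so your proposal is complete modulo that same routine structural verification.
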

\begin{proof}
See Appendix C.
\end{proof}

The desired state $\bby(t)$ is obtained as a rank-deficient
linear transformation of the inner state $\bbz(t)$, plus a stochastic
offset due to the presence of communication noise. A linear, time-invariant,
first-order difference equation describes the dynamics of $\bbz(t)$, and
hence of $\bby(t)$, via the algebraic transformation in
\eqref{D_RLS_System_Rec_2}. The time-invariant nature of the
transition matrix $\bbPsi$ is due to the approximations $\bbPhi_j^{-1}(t)\approx
\bbR_{h,\lambda}^{-1}$, $j\in\calJ$, particularly accurate for large enough $t>t_0$.
Examination of \eqref{zstate_Ch4} reveals that the evolution of $\bbz(t)$ 
is driven by three stochastic input processes: i)
communication noise $\bbeta(t-1)$ affecting the transmission of local
estimates; ii) communication noise $\bar{\bbeta}(t-1)$ contaminating the
Lagrange multipliers; and iii) observation noise within $\bbepsilon(t+1)$.

Focusing now on the calculation of $\bbR_{y_1}(t)=[\bbR_{y}(t)]_{11}$
based on Lemma \ref{Lemma_2_Ch4}, observe from the upper
$Jp\times 1$ block of $\bby(t+1)$ in \eqref{D_RLS_System_Rec_2} that
$\bby_1(t+1)=\bbz_1(t+1)+\bbR_{h,\lambda}^{-1}[\bar{\bbeta}(t)+
(\bbP_{\alpha}-\bbP_\beta)\bbeta(t)]$. Under (a3), $\bbz_1(t+1)$ is
independent of the zero-mean $\{\bar{\bbeta}(t),\bbeta(t)\}$; hence,
\begin{equation}\label{yblock_covariance_Ch4}
\bbR_{y_1}(t)=\bbR_{z_1}(t)+\bbR_{h,\lambda}^{-1}\left[\bbR_{\bar{\bbeta}}+
(\bbP_{\alpha}-\bbP_\beta)\bbR_{\bbeta}(\bbP_{\alpha}-\bbP_\beta)^{T}\right]\bbR_{h,\lambda}^{-1}
\end{equation}
which prompts one to obtain $\bbR_{z}(t):=E[\bbz(t)\bbz^{T}(t)]$. Specifically,
 the goal is to 
extract its upper-left $Jp\times Jp$ matrix block
$[\bbR_{z}(t)]_{11}=\bbR_{z_1}(t)$. To this end, define the vectors
\begin{equation}\label{noiselambdavectors}
\bar{\bbeta}_{\lambda}(t):=\left[\begin{array}{c} \bbR_{h,\lambda}^{-1}\\
\mathbf{0}_{Jp\times Jp}
\end{array}\right]\bar{\bbeta}(t),{\quad}
\bbeta_{\lambda}(t):=\left[\begin{array}{c} \bbR_{h,\lambda}^{-1}(\bbP_{\alpha}-\bbP_{\beta})\\
\bbC
\end{array}\right]\bbeta(t)
\end{equation}
whose respective covariance matrices
$\bbR_{\bar{\bbeta}_{\lambda}}:=E[\bar{\bbeta}_{\lambda}(t)\bar{\bbeta}_{\lambda}^{T}(t)]$
and
$\bbR_{\bbeta_{\lambda}}:=E[\bbeta_{\lambda}(t)\bbeta_{\lambda}^{T}(t)]$
have a structure detailed in Appendix E. Also recall that
$\bbepsilon(t)$ depends on the entire history of regressors up to time
instant $t$. Starting from \eqref{zstate_Ch4} and capitalizing on 
(a2)-(a3), it is straightforward to
obtain a first-order matrix recursion to update $\bbR_{z}(t)$ as
\begin{align}
\bbR_{z}(t)={}&\bbPsi\bbR_{z}(t-1)\bbPsi^{T}+\bbPsi\bbR_{\bar{\bbeta}_{\lambda}}\bbPsi^{T}
+\bbPsi\bbR_{\bbeta_{\lambda}}\bbPsi^{T}+\left[\begin{array}{c} \bbR_{h,\lambda}^{-1}\\
\mathbf{0}_{Jp\times Jp}
\end{array}\right]\bbR_{\bbepsilon}(t)\left[\begin{array}{c} \bbR_{h,\lambda}^{-1}\\
\mathbf{0}_{Jp\times Jp}
\end{array}\right]^{T}\nonumber\\
&+\bbPsi\bbR_{z\bbepsilon}(t)\left[\begin{array}{c} \bbR_{h,\lambda}^{-1}\\
\mathbf{0}_{Jp\times Jp}
\end{array}\right]^T+\left(\bbPsi\bbR_{z\bbepsilon}(t)\left[\begin{array}{c} 
\bbR_{h,\lambda}^{-1}\\
\mathbf{0}_{Jp\times Jp}
\end{array}\right]^T\right)^{T}\label{z_covariance_rec}\\
:={}&\bbPsi\bbR_{z}(t-1)\bbPsi^{T}+\bbR_{\bbnu}(t)\label{z_covariance_simpler}
\end{align}
where the cross-correlation matrix
$\bbR_{z\bbepsilon}(t):=E[\bbz(t-1)\bbepsilon^{T}(t)]$ is recursively
updated as (cf. Appendix D)
\begin{equation}\label{cross_recursion}
\bbR_{z\bbepsilon}(t)=\lambda\bbPsi\bbR_{z\bbepsilon}(t-1)+\lambda\left[\begin{array}{c} \bbR_{h,\lambda}^{-1}\\
\mathbf{0}_{Jp\times Jp}
\end{array}\right]\bbR_{\bbepsilon}(t-1).
\end{equation}
For notational brevity in what follows, $\bbR_{\bbnu}(t)$ in
\eqref{z_covariance_simpler} denotes all the covariance forcing terms in
the right-hand side of \eqref{z_covariance_rec}. The main result of
this section pertains to MSE stability of the D-RLS algorithm, and
provides a checkable sufficient condition under which the global error
covariance matrix $\bbR_{y_1}(t)$ has bounded entries as $t\to\infty$. Recall 
that a matrix is termed stable, when all its eigenvalues lie strictly inside
 the unit circle.

\begin{proposition}\label{Proposition_4_Ch4}
Under (a1)-(a3) and for $0\ll\lambda<1$, D-RLS is MSE
stable, i.e., $\lim_{t\to\infty}\bbR_{y_1}(t)$ has bounded entries,
provided that $c>0$ is chosen so that $\bbPsi$ is a stable matrix.
\end{proposition}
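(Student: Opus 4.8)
The plan is to reduce MSE stability to the boundedness of the inner-state covariance $\bbR_{z}(t):=E[\bbz(t)\bbz^{T}(t)]$, and then to analyze the vectorized form of the driving recursion \eqref{z_covariance_simpler}. First I would invoke \eqref{yblock_covariance_Ch4}, which writes $\bbR_{y_1}(t)$ as the sum of $\bbR_{z_1}(t)=[\bbR_{z}(t)]_{11}$ and the \emph{time-invariant} matrix $\bbR_{h,\lambda}^{-1}[\bbR_{\bar{\bbeta}}+(\bbP_{\alpha}-\bbP_{\beta})\bbR_{\bbeta}(\bbP_{\alpha}-\bbP_{\beta})^{T}]\bbR_{h,\lambda}^{-1}$. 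Since this additive offset has finite entries and is independent of $t$, it suffices to show that $\bbR_{z}(t)$ stays bounded as $t\to\infty$; its upper-left block $\bbR_{z_1}(t)$, and hence $\bbR_{y_1}(t)$, will then be bounded as well.

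Next I would vectorize \eqref{z_covariance_simpler}. Using $\textrm{vec}(\bbPsi\bbR_{z}(t-1)\bbPsi^{T})=(\bbPsi\otimes\bbPsi)\textrm{vec}(\bbR_{z}(t-1))$ gives the linear, first-order recursion
\[ \textrm{vec}(\bbR_{z}(t))=(\bbPsi\otimes\bbPsi)\,\textrm{vec}(\bbR_{z}(t-1))+\textrm{vec}(\bbR_{\bbnu}(t)). \]
The homogeneous dynamics are governed by $\bbPsi\otimes\bbPsi$, whose eigenvalues are the pairwise products of those of $\bbPsi$; hence its spectral radius equals $[\lambda_{\max}(\bbPsi)]^{2}<1$ precisely because $c$ is chosen so that $\bbPsi$ is stable. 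I would then invoke the standard fact that a stable linear time-invariant recursion driven by a bounded (respectively convergent) forcing sequence has a bounded (respectively convergent) state. What remains is to verify that the forcing sequence $\{\bbR_{\bbnu}(t)\}$ is bounded.

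To control $\bbR_{\bbnu}(t)$, the collection of forcing terms on the right-hand side of \eqref{z_covariance_rec}, I would treat its constituents separately. The terms $\bbPsi\bbR_{\bar{\bbeta}_{\lambda}}\bbPsi^{T}$ and $\bbPsi\bbR_{\bbeta_{\lambda}}\bbPsi^{T}$ are constant and finite. The observation-noise contribution involves $\bbR_{\bbepsilon}(t)$, which by the closed form \eqref{epsilon_cov} converges to $(1-\lambda^{2})^{-1}\textrm{bdiag}(\bbR_{h_1}\sigma_{\epsilon_1}^{2},\ldots,\bbR_{h_J}\sigma_{\epsilon_J}^{2})$ and is therefore uniformly bounded. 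The only nontrivial pieces are the cross terms built from $\bbR_{z\bbepsilon}(t)$; here the key observation is that \eqref{cross_recursion} is a \emph{self-contained} recursion for $\bbR_{z\bbepsilon}(t)$ that does not feed back through $\bbR_{z}(t)$. Its transition matrix is $\lambda\bbPsi$, with spectral radius $\lambda\,\lambda_{\max}(\bbPsi)<1$, and its forcing is proportional to the already-bounded $\bbR_{\bbepsilon}(t-1)$; thus $\bbR_{z\bbepsilon}(t)$ is bounded (and in fact convergent). Assembling these facts shows that $\{\bbR_{\bbnu}(t)\}$ is bounded, which closes the argument.

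The step I expect to require the most care is precisely this boundedness of the forcing term, because recursion \eqref{z_covariance_simpler} is not autonomous: $\bbR_{\bbnu}(t)$ is itself time-varying through $\bbR_{\bbepsilon}(t)$ and the cross-correlation $\bbR_{z\bbepsilon}(t)$. One must argue that the auxiliary recursion \eqref{cross_recursion} is stable in its own right, which it is thanks to the extra factor $\lambda<1$ whenever $\bbPsi$ is stable, and that $\bbR_{\bbepsilon}(t)$ and $\bbR_{z\bbepsilon}(t)$ actually converge rather than merely stay bounded. This upgrade from boundedness to convergence is what yields a genuine steady state $\lim_{t\to\infty}\textrm{vec}(\bbR_{z}(t))=(\bbI-\bbPsi\otimes\bbPsi)^{-1}\lim_{t\to\infty}\textrm{vec}(\bbR_{\bbnu}(t))$, where invertibility of $\bbI-\bbPsi\otimes\bbPsi$ is again guaranteed by the stability of $\bbPsi$. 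This limit is exactly the quantity propagated into the closed-form figures of merit of Section \ref{ssec:SS_Perf}.
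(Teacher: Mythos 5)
Your proposal is correct and follows essentially the same route as the paper's own proof: reduce to boundedness of $\bbR_{z}(t)$ via \eqref{yblock_covariance_Ch4}, vectorize \eqref{z_covariance_simpler} and use stability of $\bbPsi\otimes\bbPsi$, and handle the forcing by noting that the self-contained recursion \eqref{cross_recursion} has stable transition matrix $\lambda\bbPsi$ and convergent input $\bbR_{\bbepsilon}(t)$. The only differences are presentational (you vectorize before bounding the forcing, while the paper does the reverse), and your explicit remark that \eqref{cross_recursion} does not feed back through $\bbR_{z}(t)$ is a nice clarification of why the argument is not circular.
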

\begin{proof}
First observe that because $\lambda\in(0,1)$, it holds that
\begin{align}
\lim_{t\to\infty}\bbR_{\bbepsilon}(t)=&\lim_{t\to\infty}\left(\frac{1-
\lambda^{2(t+1)}}{1-\lambda^2}\right)
\textrm{bdiag}(\bbR_{h_1}\sigma_{\epsilon_1}^2,\ldots,\bbR_{h_J}\sigma_{
\epsilon_J}^2)\nonumber\\
=&\left(\frac{1}{1-\lambda^2}\right)
\textrm{bdiag}(\bbR_{h_1}\sigma_{\epsilon_1}^2,\ldots,\bbR_{h_J}\sigma_{
\epsilon_J}^2)=:\bbR_{\bbepsilon}(\infty).
\label{R_Epsilon_infty}
\end{align}
If $c>0$ is selected such that $\bbPsi$ is a stable matrix, then clearly
$\lambda\bbPsi$ is also stable, and hence the matrix recursion
\eqref{cross_recursion} converges to the bounded limit
\begin{equation}
\bbR_{z\bbepsilon}(\infty)=\left(\bbI_{2Jp}-\lambda\bbPsi\right)^{-1}\left[
\begin{array}{c} \lambda\bbR_{h,\lambda}^{-1}\\
\mathbf{0}_{Jp\times Jp}
\end{array}\right]\bbR_{\bbepsilon}(\infty).\label{R_HEpsilon_infty}
\end{equation}
Based on the previous arguments, it follows that the forcing matrix
$\bbR_{\bbnu}(t)$ in \eqref{z_covariance_rec} will also attain a bounded
limit as $t\to\infty$, denoted as $\bbR_{\bbnu}(\infty)$. Next, we show
that $\lim_{t\to\infty}\bbR_{z}(t)$ has bounded entries by studying its
equivalent vectorized dynamical system. Upon vectorizing
\eqref{z_covariance_simpler}, it follows that
\begin{align}
\textrm{vec}[\bbR_{z}(t)]=&\textrm{vec}[\bbPsi\bbR_{z}(t-1)\bbPsi^{T}]+
\textrm{vec}[\bbR_{\bbnu}(t)]\nonumber\\
=&\left(\bbPsi\otimes\bbPsi\right)\textrm{vec}[\bbR_{z}(t-1)]+
\textrm{vec}[\bbR_{\bbnu}(t)]
\nonumber
\end{align}
where in obtaining the last equality we used the property
$\textrm{vec}[\bbR\bbS\bbT]=\left(\bbT^{T}\otimes\bbR\right) 
\textrm{vec}[\bbS]$. Because
the eigenvalues of $\bbPsi\otimes\bbPsi$ are the pairwise products of
those of $\bbPsi$, stability of $\bbPsi$ implies stability of the
Kronecker product. As a result, the vectorized recursion will converge to
the limit
\begin{equation}\label{veclimit}
\textrm{vec}[\bbR_{z}(\infty)]=\left(\bbI_{(2Jp)^2}-\bbPsi\otimes\bbPsi\right)^
{-1}\textrm{vec}[\bbR_{\bbnu}(\infty)]
\end{equation}
which of course implies that
$\lim_{t\to\infty}\bbR_{z}(t)=\bbR_{z}(\infty)$ has bounded entries. From
\eqref{yblock_covariance_Ch4}, the same holds true for $\bbR_{y_1}(t)$,
and the proof is completed.
\end{proof}

Proposition \ref{Proposition_4_Ch4} asserts that the AMA-based D-RLS
algorithm is stable in the MSE-sense, even when the WSN links are
challenged by additive noise. While most distributed adaptive estimation
works have only looked at ideal inter-sensor links, others 
have adopted diminishing step-sizes to mitigate the
undesirable effects of communication noise~\cite{Kar_Consensus_Noise_TSP,Hatano_Consensus_Noise_2007}.
This approach however, limits their applicability to stationary environments. 
Remarkably, the AMA-based D-RLS algorithm exhibits robustness to noise when
using a constant
step-size $c$, a feature that has also been observed for AD-MoM
related distributed iterations in 
e.g.,~\cite{Yannis_Ale_GG_PartI,Yannis_Gonzalo_GG_DLMS}, and~\cite{Gonzalo_Yannis_GG_DRLS}.

As a byproduct, the proof of Proposition \ref{Proposition_4_Ch4} also
provides part of the recipe towards evaluating the steady-state MSE
performance of the D-RLS algorithm. Indeed, by plugging
\eqref{R_Epsilon_infty} and \eqref{R_HEpsilon_infty} into
\eqref{z_covariance_rec} one obtains the steady-state covariance matrix
$\bbR_{\bbnu}(\infty)$. It is then possible to evaluate
$\bbR_{z}(\infty)$, by reshaping the vectorized identity \eqref{veclimit}.
Matrix $\bbR_{z_1}(\infty)$ can be extracted from the upper-left $Jp\times
Jp$ matrix block of $\bbR_{z}(\infty)$, and the desired global error
covariance matrix $\bbR_{y_1}(\infty)=[\bbR_{y}(\infty)]_{11}$ becomes 
available via
\eqref{yblock_covariance_Ch4}. Closed-form evaluation of the
MSE$(\infty)$, EMSE$(\infty)$ and MSD$(\infty)$ for every sensor
$j\in\calJ$ is now possible given $\bbR_{y_1}(\infty)$, by resorting to
the formulae in Table \ref{table:figures_of_merit}.

Before closing this section, an alternative notion of stochastic stability that 
readily follows from Proposition \ref{Proposition_4_Ch4} is established here. 
Specifically, it is
possible to show that under the independence setting assumptions (a1)-(a3)
considered so far, the global
error norm $\|\bby_1(t)\|$ remains most of the time within a finite interval,
i.e., errors are weakly stochastic bounded
(WSB)~\cite{Solo_Stability_LMS},~\cite[pg. 110]{Solo_Adaptive_Book}. This
WSB stability guarantees that for any $\theta>0$, there exists a $\zeta>0$
such that $\textrm{Pr}[\|\bby_1(t)\|<\zeta]=1-\theta$ uniformly in time. 

\begin{corollary}\label{corollary_1}
Under (a1)-(a3) and for $0\ll\lambda<1$, if $c>0$ is chosen so that 
$\bbPsi$ is a stable matrix, then the D-RLS algorithm yields estimation errors 
which are WSB;
i.e., $\lim_{\zeta\rightarrow\infty}\sup_{t\geq t_0}\textrm{\emph{Pr}}[\|\bby_1(t)\|\geq\zeta]=0.$
\end{corollary}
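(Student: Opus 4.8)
The plan is to reduce the WSB claim to the uniform-in-time boundedness of the second moment $E[\|\bby_1(t)\|^2]$ through a Markov-inequality argument, and then to invoke Proposition \ref{Proposition_4_Ch4} to supply that bound. First I would observe that $\|\bby_1(t)\|^2$ is a nonnegative random variable, so for any $\zeta>0$ Markov's inequality yields
$$\textrm{Pr}[\|\bby_1(t)\|\geq\zeta]=\textrm{Pr}[\|\bby_1(t)\|^2\geq\zeta^2]\leq\frac{E[\|\bby_1(t)\|^2]}{\zeta^2}.$$

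Next I would rewrite the numerator in terms of the covariance matrix already analyzed. Since $\bbR_{y_1}(t)=E[\bby_1(t)\bby_1^{T}(t)]$, linearity of the trace together with linearity of expectation gives $E[\|\bby_1(t)\|^2]=E[\textrm{tr}(\bby_1(t)\bby_1^{T}(t))]=\textrm{tr}(\bbR_{y_1}(t))$. Substituting this back and taking the supremum over $t\geq t_0$ produces
$$\sup_{t\geq t_0}\textrm{Pr}[\|\bby_1(t)\|\geq\zeta]\leq\frac{1}{\zeta^2}\sup_{t\geq t_0}\textrm{tr}(\bbR_{y_1}(t)).$$

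The crux is then establishing that $M:=\sup_{t\geq t_0}\textrm{tr}(\bbR_{y_1}(t))<\infty$. Proposition \ref{Proposition_4_Ch4} guarantees that, when $c>0$ is chosen so that $\bbPsi$ is stable, $\lim_{t\to\infty}\bbR_{y_1}(t)$ has bounded entries; hence the scalar sequence $\textrm{tr}(\bbR_{y_1}(t))$ converges to a finite limit. A convergent real sequence is bounded, and each finite-$t$ term is itself finite because the covariance recursion \eqref{z_covariance_rec} feeding \eqref{yblock_covariance_Ch4} starts from a finite initialization and accumulates finite forcing terms. Together these facts give $M<\infty$, after which letting $\zeta\to\infty$ in the displayed bound drives the right-hand side to zero, which is precisely the asserted WSB property.

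The argument is essentially routine once Proposition \ref{Proposition_4_Ch4} is in hand; the only point requiring care is that WSB demands a supremum over \emph{all} $t\geq t_0$ rather than merely the limiting behavior, so I would be explicit that convergence of $\textrm{tr}(\bbR_{y_1}(t))$ together with finiteness at every finite $t$ upgrades the asymptotic bound of Proposition \ref{Proposition_4_Ch4} to a genuine uniform-in-time bound. I do not anticipate any further obstacle.
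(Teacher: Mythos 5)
Your proposal is correct and follows essentially the same route as the paper's proof: the paper applies Chebyshev's inequality (which, as you note, is just Markov's inequality applied to $\|\bby_1(t)\|^2$) to bound $\textrm{Pr}[\|\bby_1(t)\|\geq\zeta]$ by $\textrm{tr}([\bbR_y(t)]_{11})/\zeta^2$, invokes Proposition \ref{Proposition_4_Ch4} to conclude $\sup_{t\geq t_0}\textrm{tr}([\bbR_y(t)]_{11})<\infty$, and then lets $\zeta\to\infty$. The only difference is cosmetic: you spell out explicitly why convergence of the trace plus finiteness at each finite $t$ yields the uniform-in-time bound, a step the paper asserts in one line.
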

\begin{IEEEproof}
Chebyshev's inequality implies that
\begin{equation}\label{Chebyshev_application}
\textrm{Pr}[\|\bby_1(t)\|\geq\zeta]\leq \frac{E[\|\bby_1(t)\|^{2}]}{\zeta^2}
=\frac{\textrm{tr}([\bbR_y(t)]_{11})}{\zeta^2}.
\end{equation}
From Proposition \ref{Proposition_4_Ch4}, $\lim_{t\to\infty}[\bbR_y(t)]_{11}$
has bounded entries, implying that $\sup_{t\geq
t_0}\textrm{tr}([\bbR_y(t)]_{11})<\infty$. Taking the limit as
$\zeta\to\infty$, while relying on the bound in
\eqref{Chebyshev_application} which holds for all values of $t\geq t_0$, yields the desired result.
\end{IEEEproof}

In
words, Corollary \ref{corollary_1} ensures that with overwhelming probability, 
local sensor estimates remain inside a ball with finite radius, centered
 at $\bbs_0$.
It is certainly a weak notion of stability, many times the only one that
can be asserted when the presence of, e.g., time-correlated
data, renders variance calculations impossible; see
also~\cite{Yannis_Gonzalo_GG_DLMS},~\cite{Solo_Stability_LMS}. In this case
where  stronger assumptions are invoked, WSB follows immediately once
MSE-sense stability is established.
Nevertheless, it is an important practical notion as it ensures -- on a
per-realization basis -- that 
estimation errors have no probability mass escaping to infinity. In particular, D-RLS estimation 
errors are shown WSB in the presence of communication noise; a property
not enjoyed by other distributed iterations for e.g., consenting 
on averages~\cite{Xiao_Fast_Iterations_2004}.

% % % % % % % % % % % % % % % % % % % % % % % % % % % % % % % % % % % % % % % %
%                         Section V                                           %
% % % % % % % % % % % % % % % % % % % % % % % % % % % % % % % % % % % % % % % %

\section{Numerical Tests}\label{sec:sims_Perf}

Computer simulations are carried out here to corroborate the
analytical results of Section \ref{ssec:SS_Perf}. Even though based on
simplifying assumptions and approximations, the usefulness of the analysis
is justified since the predicted steady-state MSE figures of merit
accurately match the empirical D-RLS limiting values. In accordance
with the adaptive filtering folklore, when $\lambda\to 1$ 
the upshot of the analysis under the
independence setting assumptions is shown to extend accurately
to the pragmatic scenario whereby sensors acquire time-correlated
data. For $J=15$ sensors,
a connected ad hoc WSN is generated as a realization of the random
geometric graph model on the unit-square, with communication range
$r=0.3$~\cite{Gupta_Kumar_Capacity_WN}. To model non-ideal inter-sensor links, 
additive white Gaussian noise (AWGN) with
variance $\sigma_\eta^2=10^{-1}$ is added at the receiving end. The WSN 
used for the experiments is depicted in Fig. \ref{wsn}.

With $p=4$ and $\bbs_0=\mathbf{1}_p$, observations obey a linear model
[cf. (a1)] with sensing WGN of spatial variance profile
$\sigma_{\epsilon_j}^2=10^{-3}\alpha_j$, where
$\alpha_j\sim\mathcal{U}[0,1]$ (uniform distribution) and i.i.d.. The
regression vectors $\bbh_j(t):=[h_j(t)\ldots h_j(t-p+1)]^{T}$ have a shift
structure, and entries which evolve according to first-order stable autoregressive
processes
$h_j(t)=(1-\rho)\beta_jh_j(t-1)+\sqrt{\rho}\omega_j(t)$ for all
$j\in\calJ$. We choose $\rho=5\times 10^{-1}$, the
$\beta_j\sim\mathcal{U}[0,1]$ i.i.d. in space, and the driving white noise
$\omega_j(t)\sim\mathcal{U}[-\sqrt{3}\sigma_{\omega_j},\sqrt{3}\sigma_{\omega_j}]$
with spatial variance profile given by $\sigma_{\omega_j}^{2}=2\gamma_j$
with $\gamma_j\sim\mathcal{U}[0,1]$ and i.i.d.. Observe that the data is
temporally-correlated, implying that (a2) does not hold here.

For all experimental performance curves obtained by running the
algorithms, the ensemble averages are approximated by sample averaging
$200$ runs of the experiment.

First, with $\lambda=0.95$, $c=0.1$ and $\delta=100$ for the AMA-based
D-RLS algorithm, Fig. \ref{global_perf_DRLS} depicts the network
performance through the evolution of the $\textrm{EMSE}(t)$ and
$\textrm{MSD}(t)$ figures of merit. Both noisy and ideal links are
considered. The steady-state limiting
values found in Section \ref{ssec:SS_Perf} are extremely accurate, even
though the simulated data does not adhere to (a2), and the results are
based on simplifying approximations. As intuitively expected and
analytically corroborated via the noise-related additive terms in
\eqref{yblock_covariance_Ch4} and \eqref{z_covariance_rec}, the
performance penalty due to non-ideal links is also apparent.

We also utilize the analytical results developed throughout this paper to
contrast the per sensor performance of D-RLS and the D-LMS algorithm
in~\cite{Gonzalo_Yannis_GG_DLMS_Perf}. In particular, the parameters chosen for D-LMS are
$\mu=5\times 10^{-3}$ and $c=1$. Fig. \ref{local_perf_DRLS} shows the
values of the $\textrm{EMSE}_j(\infty)$ and $\textrm{MSD}_j(\infty)$ for
all $j\in\calJ$. As expected, the second-order D-RLS scheme attains
improved steady-state performance uniformly across all sensors in the
simulated WSN. In this particular simulated test, gains as high as $5$dB
in estimation error can be achieved at the price of increasing
computational burden per sensor, from $\mathcal{O}(p)$ to $\mathcal{O}(p^2)$ per iteration.

% % % % % % % % % % % % % % % % % % % % % % % % % % % % % % % % % % % % % % % %
%                         Section VI                                          %
% % % % % % % % % % % % % % % % % % % % % % % % % % % % % % % % % % % % % % % %

\section{Concluding Summary and Future Work}\label{sec:conc} 
A distributed RLS-like algorithm is developed in this paper, which is
capable of performing adaptive
estimation and tracking using WSNs in which sensors cooperate
with single-hop neighbors. The WSNs considered here are quite general 
since they do not necessarily possess a Hamiltonian cycle, while the 
inter-sensor links are challenged by communication noise. Distributed
iterations are derived after: i)
reformulating in a separable way the exponentially weighed
least-squares (EWLS) cost involved in the classical RLS algorithm; and
ii) applying the AMA to minimize this separable cost in
a distributed fashion. The AMA is especially well-suited
to capitalize on the strict convexity of the EWLS cost, and thus 
offer significant reductions in computational complexity per sensor,
when compared to existing alternatives.
This way, salient features of the classical RLS algorithm are shown to carry over
to a distributed WSN setting, namely reduced-complexity estimation 
when a state and/or data model is not available and fast convergence 
rates are at a premium. 

An additional contribution of this paper pertains to a detailed steady-state
MSE performance analysis, that relies on an `averaged'
error-form system representation of D-RLS.
The theory is developed under some simplifying
approximations, and resorting to the independence setting assumptions. 
This way, it is possible to obtain accurate closed-form expressions 
for both the per sensor and network-wide
relevant performance metrics as $t\to\infty$. Sufficient conditions under which
the D-RLS algorithm is stable in the mean- and MSE-sense are provided as well. 
As a corollary, the D-RLS estimation errors are also shown to  
remain within a finite interval with high probability, even when
the inter-sensor links are challenged by additive noise.
Numerical simulations demonstrated that the analytical findings of this paper
extend accurately to a more realistic WSN setting, whereby sensors acquire temporally 
correlated sensor data.

Regarding the performance of the D-RLS algorithm, there are still several 
interesting directions to pursue as future work. 
First, it would be nice to establish a stochastic \textit{trajectory locking} result 
which formally shows that as $\lambda\to 1$, the D-RLS estimation error trajectories
closely follow the ones of its time-invariant `averaged'
system companion.
Second, the steady-state MSE performance analysis was
carried out when $0\ll\lambda<1$. For the infinite
memory case in which $\lambda=1$, numerical simulations indicate that
D-RLS provides mean-square sense-consistent estimates, even in the
presence of communication noise. By formally establishing this property, D-RLS
becomes an even more appealing alternative for distributed parameter
estimation in stationary environments. While the approximations used in
this paper are no longer valid when $\lambda=1$, for Gaussian
i.i.d. regressors matrix $\bbPhi^{-1}(t)$ is Wishart distributed with 
known moments. Under these assumptions, consistency analysis is a subject of 
ongoing investigation.

% % % % % % % % % % % % % % % % % % % % % % % % % % % % % % % % % % % % % % % %
%                         Appendices                                          %
% % % % % % % % % % % % % % % % % % % % % % % % % % % % % % % % % % % % % % % %

{\Large\appendix}

% % % % % % % % % % % % % % % % % % % % % % % % % % % % % % % % % % % % % % % %
%                         Appendix A                                          %
% % % % % % % % % % % % % % % % % % % % % % % % % % % % % % % % % % % % % % % %

\noindent\normalsize \emph{\textbf{A. Proof of Lemma}
\ref{Lemma_1_Ch4}}: Let $t_0$ be chosen large enough to ensure 
that 
\begin{equation*}
\lim_{t\to t_0}\mathbf{\Phi}_j(t)=\lim_{t\to 
t_0}\sum_{\tau=0}^{t}\lambda^{t-\tau}\mathbf{h}_j(\tau)\mathbf{h}_j^T(\tau)+J
^{-1}\lambda^{t}\mathbf{\Phi}_0\approx \frac{\bbR_{h_j}}{1-\lambda},\quad 
j\in\calJ.
\end{equation*}
For $t>t_0$, consider replacing $\mathbf{\Phi}_j^{-1}(t)$ in 
\eqref{ST_Sj_Update_AMA} with
the approximation $(1-\lambda)\bbR_{h_j}^{-1}$ for its expected value, to 
arrive at the `average' D-RLS system recursions
\begin{align}
\mathbf{v}_j^{j'}(t)={}&\mathbf{v}_{j}^{j'}(t-1)
+\frac{c}{2}\left[\mathbf{s}_{j}(t)-({\mathbf{s}}_{j'}(t)+\bbeta_j^{j'}(t))
\right],
{\quad}j'\in\calN_j\label{Vj_Update_averaged}\\
\bbs_j(t+1)={}&(1-\lambda)\bbR_{h_j}^{-1}\bbpsi_j(t+1)
-\frac{1}{2}(1-\lambda)\bbR_{h_j}^{-1}\sum_{j'\in\calN_j}\left[\bbv_{j}^{j'}(t)
-(\bbv_{j'}^j(t)+\bar{\bbeta}_{j}^{j'}(t))\right]\label{Sj_Update_averaged}
\end{align}

After summing $(\bbv_{j}^{j'}(t)-\bbv^{j}_{j'}(t))/2$ over $j'\in\calN_j$, it
follows from \eqref{Vj_Update_averaged} that for all $j\in\calJ$
\begin{align}
\hspace{-0.3cm}\bby_{2,j}(t+1):=&\:\frac{1}{2}\sum_{j'\in\calN_j}(\bbv_{j}^{j'}
(t)-\bbv^{j}_{j
'}(t))=\bby_{2,j}(t)+\frac{c}{2}\sum_{j'\in\calN_j}(\bbs_j(t)-\bbs_{j'}(t))-
\frac{c}{4}\sum_{j'\in\calN_j}(\bbeta_{j}^{j'}(t)-
\bbeta^{j}_{j'}(t))\label{y2recursionfirstsetp}\\
=&\: 
\bby_{2,j}(t)+\frac{c}{2}\sum_{j'\in\calN_j}(\bby_{1,j}(t)-\bby_{1,j'}(t))-
\bbeta_j^\alpha(t)+\bbeta_j^\beta(t),\label{y2recursion}
\end{align}
where the last equality was obtained after adding and subtracting
$c|\calN_j|\bbs_0$ from the right-hand side of 
\eqref{y2recursionfirstsetp}, and
relying on the definitions in \eqref{noises_alpha_beta}. Next, starting
from \eqref{Sj_Update_averaged} and upon: i) using (a1) to eliminate
\begin{equation*}
\bbpsi_j(t+1)=\sum_{\tau=0}^{t+1}\lambda^{t+1-\tau}\bbh_j(\tau) 
\bbh_j^T(\tau)\bbs_0+ 
\sum_{\tau=0}^{t+1}\lambda^{t+1-\tau}\bbh_j(\tau)\epsilon_j(\tau)\approx
\frac{\bbR_{h_j}}{1-\lambda}\bbs_0+\sum_{\tau=0}^{t+1}\lambda^{t+1-\tau}\bbh_j(
\tau)\epsilon_j(\tau)
\end{equation*}
from
\eqref{Sj_Update_averaged};
ii) recognizing $\bby_{2,j}(t+1)$ in the right-hand side of 
\eqref{Sj_Update_averaged} and
substituting it with \eqref{y2recursion}; and iii) replacing the sums of noise 
vectors with the
quantities defined in \eqref{baretajt_Ch4} and \eqref{noises_alpha_beta}; one 
arrives at
\begin{align}\label{y1recursion}
\bby_{1,j}(t+1)
=&\:(1-\lambda)\bbR_{h_j}^{-1}\left[-\frac{c}{2}\sum_{j'\in\calN_j}(\bby_{1,j}(
t)-\bby_{1,j'}(t))
-\bby_{2,j}(t)\right]
\nonumber\\
&+(1-\lambda)\bbR_{h_j}^{-1}\left[\sum_{\tau=0}^{t+1}\lambda^{t+1-\tau}\bbh_j(
\tau)\epsilon_j(\tau)+\bbeta_j^\alpha(t)-\bbeta_j^\beta(t)+\bar{\bbeta}_j(t)
\right].
\end{align}

What remains to be shown is that after stacking the recursions
\eqref{y1recursion} and \eqref{y2recursion} for $j=1,\ldots,J$ to form the
one for $\bby(t+1)$, we can obtain the compact representation in
\eqref{ystate_DRLS}. Examining \eqref{y2recursion} and 
\eqref{y1recursion}, it is apparent that a common matrix factor 
$\textrm{bdiag}((1-\lambda)\bbR_{h_j}^{-1},\bbI_{Jp})$ can be pulled out to 
symplify the expression for $\bby(t+1)$.  Consider first the forcing terms in 
\eqref{ystate_DRLS}.
Stacking the channel noise terms from \eqref{y1recursion} and
\eqref{y2recursion}, readily yields the last three terms inside the curly 
brackets in \eqref{ystate_DRLS}. Likewise, stacking the terms
$\sum_{\tau=0}^{t+1}\lambda^{t+1-\tau}\bbh_j(
\tau)\epsilon_j(\tau)$ for $j=1,\ldots,J$ yields the second term due to the 
observation noise; recall the definition of $\bbepsilon(t+1)$. This term as 
well as the  vectors 
$\bar{\bbeta}_j(t)$ are not present in
\eqref{y2recursion}, which explains the zero vector at the lower part of
the second and third terms inside the curly brackets of \eqref{ystate_DRLS}.

To specify the structure of the transition matrix $\bbUpsilon$,
note that the first term on the right-hand side of \eqref{y2recursion} explains
 why
$[\bbUpsilon]_{22}=\bbI_{Jp}$. Similarly, the second term inside
the first square brackets in \eqref{y1recursion} explains why
$[\bbUpsilon]_{12}=-\bbI_{Jp}$. Next, it follows readily that
upon stacking the terms
$(c/2)\sum_{j'\in\calN_j}(\bby_{1,j}(t)-\bby_{1,j'}(t))$, which correspond to
a scaled Laplacian-based combination of $p\times 1$ vectors, one obtains
$[(c/2)\bbL\otimes \mathbf{I}_{p}]\bby_1(t)=\bbL_c\bby_1(t)$. This justifies
why $[\bbUpsilon]_{11}=-[\bbUpsilon]_{21}=-\bbL_{c}$. 

A comment is due regarding the initialization for $t=t_0$. Although the vectors
 $\{\bby_{1,j}(t_0)\}_{j=1}^{J}$ are decoupled so that
$\bby_1(t_0)$ can be chosen arbitrarily, this is not the case for
$\{\bby_{2,j}(t_0)\}_{j=1}^{J}$ which are coupled and satisfy
\begin{equation}\label{multicoupling}
\sum_{j=1}^{J}\bby_{2,j}(t)=\sum_{j=1}^{J}\sum_{j'\in\calN_j}
(\bbv_{j}^{j'}(t-1)-\bbv^{j}_{j'}(t-1))=\mathbf{0}_{p},{\quad}\forall\; t\geq 
0.
\end{equation}
The coupling across $\{\bby_{2,j}(t)\}_{j=1}^{J}$ dictates 
$\bby_2(t_0)$ to be chosen in compliance with \eqref{multicoupling}, so
that the system \eqref{ystate_DRLS} is equivalent to \eqref{Vj_Update_averaged}
 and \eqref{Sj_Update_averaged} for all $t\geq t_0$.
Let $\bby_2(t_0)=\bbL_c\bby_2'(t_0)$, where $\bby_2'(t_0)$ is any vector in
$\mathbb{R}^{Jp}$. Then, it is not difficult to see that $\bby_2(t_0)$ satisfies the
 conservation law
\eqref{multicoupling}.
In conclusion, for arbitrary $\bby'(0)\in\mathbb{R}^{2Jp}$ the recursion
\eqref{ystate_DRLS} should be initialized as
$\bby(0)=\textrm{bdiag}(\bbI_{Jp},\bbL_c)\bby'(0)$, and the proof of Lemma
\ref{Lemma_1_Ch4} is completed.
\hfill$\blacksquare$

% % % % % % % % % % % % % % % % % % % % % % % % % % % % % % % % % % % % % % % %
%                         Appendix B                                          %
% % % % % % % % % % % % % % % % % % % % % % % % % % % % % % % % % % % % % % % %

\noindent\normalsize \emph{\textbf{B. Proof of Lemma}
\ref{Lemma_choose_c}}: Recall the structure of matrix $\bbUpsilon$ given in 
Lemma
\ref{Lemma_1_Ch4}. A vector $\bbv_i^{T}:=\left[\bbv_{1,i}^{T}\:\:
\bbv_{2,i}^{T}\right]$ with $\{\bbv_{j,i}\}_{j=1}^2\in\mathbb{R}^{Jp\times
1}$ is a left eigenvector of $\bbOmega$ associated to the eigenvalue one,
if and only if it solves the following linear system of equations
\begin{align}
-\bbv_{1,i}^{T}(1-\lambda)\bbR_h^{-1}\bbL_c+\bbv_{2,i}^{T}\bbL_c&=\bbv_{1,i}^{T}\nonumber\\
-\bbv_{1,i}^{T}(1-\lambda)\bbR_h^{-1}+\bbv_{2,i}^{T}&=\bbv_{2,i}^{T}\nonumber
\end{align}
The second equation can only be satisfied for
$\bbv_{1,i}=\mathbf{0}_{Jp}$, and upon substituting this value in the
first equation one obtains that $\bbv_{2,i}\in\textrm{nullspace}(\bbL_c)=
\textrm{nullspace}(\bbL\otimes \bbI_p)$ for all values of $c>0$. Under the
assumption of a connected ad hoc WSN, 
$\textrm{nullspace}(\bbL)=\textrm{span}(\mathbf{1}_{J})$ and hence
$\textrm{nullspace}(\bbL\otimes \bbI_p)$ is a $p$-dimensional subspace.

Following steps similar to those in~\cite[Appendix
H]{Yannis_Ale_GG_PartI}, it is possible to express the eigenvalues of
$\bbOmega$ that are different from one as the roots of a second-order
polynomial. Such a polynomial does not have an independent term, so that
some eigenvalues are zero. With respect to the rest of the eigenvalues, it
is possible to show that their magnitude is upper bounded by
$\lambda_{\max}(\bbI_{Jp}-(1-\lambda)\bbR_h^{-1}\bbL_c)$. 
%Note that
%%
%\begin{equation}\label{similarity}
%\bbR_h^{1/2}\left[(1-\lambda)\bbR_h^{-1}\bbL_c\right]\bbR_h^{-1/2}=
%\frac{c}{2}(1-\lambda)\bbR_h^{-1/2}(\bbL\otimes \bbI_p)\bbR_h^{-1/2}
%\end{equation}
%has the same eigenvalues as $(1-\lambda)\bbR_h^{-1}\bbL_c$ because these
%are invariant under similarity transformations. Focusing on the right hand
%side of \eqref{similarity}, from Sylvester's law of intertia~\cite[p.
%403]{Golub_Book} it follows that all eigenvalues of
%$(1-\lambda)\bbR_h^{-1}\bbL_c$ are real and nonnegative. 
Hence, it is
possible to select $c>0$ such that
$\lambda_{\max}(\bbI_{Jp}-(1-\lambda)\bbR_h^{-1}\bbL_c)<1$, or
equivalently $|1-(1-\lambda)\lambda_{\max}(\bbR_h^{-1}\bbL_c)|<1$, which is
the same as condition \eqref{cbound}.\hfill$\blacksquare$

% % % % % % % % % % % % % % % % % % % % % % % % % % % % % % % % % % % % % % % %
%                         Appendix C                                          %
% % % % % % % % % % % % % % % % % % % % % % % % % % % % % % % % % % % % % % % %

\noindent\normalsize \emph{\textbf{C. Proof of Lemma}
\ref{Proposition_4_Ch4}}: The goal is to establish 
the equivalence between
the dynamical systems in \eqref{ystate_DRLS} and \eqref{D_RLS_System_Rec_2} for
all $t\geq t_0$, when the inner state is arbitrarily initialized as
$\bbz(t_0)=\bby'(t_0)$. We will argue by induction. For $t=t_0$, it follows 
from
\eqref{zstate_Ch4} that
$\bbz(t_0+1)=\bbPsi\bby'(t_0)+[\bbR_{h,\lambda}^{-1}\;\mathbf{0}^{T}]^{T}
\bbepsilon(t_0+1)$, since (by convention) there is no communication noise 
for $t<t_0$.
Upon substituting $\bbz(t_0+1)$ into \eqref{D_RLS_System_Rec_2}, we find
\begin{equation}\label{ystate_initial}
\bby(t_0+1)=\textrm{bdiag}(\bbI_{Jp},\bbL_c)\bbPsi\bby'(t_0)+\left[
\begin{array}{c} \bbR_{h,\lambda}^{-1}\\
\mathbf{0}_{Jp\times Jp}
\end{array}\right](\bbepsilon(t_0+1)+\bar{\bbeta}(t_0))+\left[\begin{array}{c} 
\bbR_{h,\lambda}^{-1}(\bbP_{\alpha}-\bbP_{\beta})\\
\bbP_{\beta}-\bbP_{\alpha}
\end{array}\right]{\bbeta}(t_0).
\end{equation}
Note that: i)
$\textrm{bdiag}(\bbI_{Jp},\bbL_c)\bbPsi=\bbUpsilon\textrm{bdiag}(
\bbI_{Jp},\bbL_c)$; ii) $\bby(t_0)=\textrm{bdiag}(\bbI_{Jp},\bbL_c)\bby'(t_0)$
for the system in Lemma \ref{Lemma_1_Ch4}; and iii)
$\bbeta_{\alpha}(t)=\bbP_\alpha\bbeta(t)$, while
$\bbeta_{\beta}(t)=\bbP_\beta\bbeta(t)$ [cf. Appendix E]. Thus, the right-hand 
side of
\eqref{ystate_initial} is equal to the right-hand side of \eqref{ystate_DRLS} 
for $t=t_0$.

Suppose next that \eqref{D_RLS_System_Rec_2} and \eqref{zstate_Ch4} hold true
for $\bby(t)$ and $\bbz(t)$, with $t\geq t_0$. The same will be shown for 
$\bby(t+1)$ and
$\bbz(t+1)$. To this end, replace $\bby(t)$ with the right-hand side of
\eqref{D_RLS_System_Rec_2} evaluated at time $t$, into
\eqref{ystate_DRLS} to obtain
\begin{align}\label{ystate_induction}
\bby(t+1)=&\:\textrm{bdiag}(\bbR_{h,\lambda}^{-1},\bbI_{Jp})\left\{
\bbUpsilon\textrm{bdiag}(\bbI_{Jp},\bbL_c)\bbz(t)
+\bbUpsilon\left[\begin{array}{c} \bbR_{h,\lambda}^{-1}\\
\mathbf{0}_{Jp\times Jp}
\end{array}\right]\bar{\bbeta}(t-1)+\left[\begin{array}{c} \bbI_{Jp}\\
\mathbf{0}_{Jp\times Jp}
\end{array}\right]\bbepsilon(t+1)
\right.\nonumber\\
&\left.+\bbUpsilon\left[\begin{array}{c} \bbR_{h,\lambda}^{-1}  
(\bbP_{\alpha}-\bbP_{\beta})\\
\bbP_{\beta}-\bbP_{\alpha}
\end{array}\right]{\bbeta}(t-1)
+\left[\begin{array}{c} \bbI_{Jp}\\
\mathbf{0}_{Jp\times Jp}
\end{array}\right]\bar{\bbeta}(t)
+\left[\begin{array}{c} \bbI_{Jp}\\
-\bbI_{Jp}
\end{array}\right]\bbeta_\alpha(t)-
\left[\begin{array}{c} \bbI_{Jp}\\
-\bbI_{Jp}
\end{array}\right]\bbeta_\beta(t)\right\}\nonumber\\
=&\:\textrm{bdiag}(\bbI_{Jp},\bbL_c)\left(\bbPsi\bbz(t)
+\bbPsi\left[\begin{array}{c} \bbR_{h,\lambda}^{-1}\\
\mathbf{0}_{Jp\times Jp}
\end{array}\right]\bar{\bbeta}(t-1)
+\bbPsi\left[\begin{array}{c}  
\bbR_{h,\lambda}^{-1}(\bbP_{\alpha}-\bbP_{\beta})\\
\bbC
\end{array}\right]\bbeta(t-1)\right.\nonumber\\
&\left.+\left[
\begin{array}{c} \bbR_{h,\lambda}^{-1}\\
\mathbf{0}_{Jp\times Jp}
\end{array}\right]\bbepsilon(t+1)\right)+\left[\begin{array}{c} 
\bbR_{h,\lambda}^{-1}\\
\mathbf{0}_{Jp\times Jp}
\end{array}\right]\bar{\bbeta}(t)
+\left[\begin{array}{c} \bbR_{h,\lambda}^{-1}(\bbP_{\alpha}-\bbP_{\beta})\\
\bbP_{\beta}-\bbP_{\alpha}
\end{array}\right]{\bbeta}(t)
\end{align}
where in obtaining the last equality in \eqref{ystate_induction}, the following were 
used: i)
$\textrm{bdiag}(\bbI_{Jp},\bbL_c)\bbPsi=\bbUpsilon\textrm{bdiag}(
\bbI_{Jp},\bbL_c)$
; ii) the relationship between $\bbeta_{\alpha}(t), \bbeta_{\beta}(t)$ and
$\bbeta(t)$ given in Appendix E; and iii) the existence of a matrix $\bbC$ such
 that $\bbL_c
\bbC=\bbP_\beta-\bbP_\alpha$. This made possible to extract the common
factor $\textrm{bdiag}(\bbI_{Jp},\bbL_c)$ and deduce from
\eqref{ystate_induction} that $\bby(t+1)$ is given by
\eqref{D_RLS_System_Rec_2}, while $\bbz(t+1)$ is provided by
\eqref{zstate_Ch4}.

In order to complete the proof, one must show the existence of matrix
$\bbC$. To this end, via a simple evaluation one can check
that
$\textrm{nullspace}(\bbL_c)\subseteq\textrm{nullspace}(\bbP_{\beta}^T-\bbP_{
\alpha}^T)$,
and since $\bbL_c$ is symmetric, one has
$\textrm{nullspace}(\bbL_c)\bot\textrm{range}(\bbL_c)$. As
$\textrm{nullspace}(\bbP_{\beta}^T-\bbP_{\alpha}^T)\bot\textrm{range}(\bbP_{
\beta}-\bbP_{\alpha})$,
it follows that
$\textrm{range}(\bbP_{\beta}-\bbP_{\alpha})\subseteq\textrm{range}(\bbL_c)$,
which further implies that there exists  $\bbC$ such that
$\bbL_c\bbC=\bbP_{\beta}-\bbP_{\alpha}$.\hfill$\blacksquare$

% % % % % % % % % % % % % % % % % % % % % % % % % % % % % % % % % % % % % % % %
%                         Appendix D                                          %
% % % % % % % % % % % % % % % % % % % % % % % % % % % % % % % % % % % % % % % %

\noindent\normalsize \emph{\textbf{D. Derivation of \eqref{cross_recursion}}}: 
First observe that the noise supervector $\bbepsilon(t)$ obeys the first-order 
recursion
\begin{equation}\label{recursion_epsilon}
\bbepsilon(t):=\sum_{\tau=0}^{t}\lambda^{t-\tau}[\bbh_1^T(\tau)\epsilon_1(\tau)
\ldots\bbh_J^T(\tau)\epsilon_J(\tau)]^T=\lambda\bbepsilon(t-1)+
[\bbh_1^T(t)\epsilon_1(t)\ldots\bbh_J^T(t)\epsilon_J(t)]^T.
\end{equation}
Because under (a3) the zero-mean $\{\epsilon_j(t)\}_{j\in\calJ}$ are 
independent of 
$\bbz(t-1)$ [cf. \eqref{zstate_Ch4}], it follows readily that 
$\bbR_{z\bbepsilon}(t):=E[\bbz(t-1)\bbepsilon^{T}(t)]=\lambda E[\bbz(t-1)
\bbepsilon^{T}(t-1)]$. Plugging the expression for $\bbz(t-1)$ and carrying out
 the expectation yields
\begin{align}
E[\bbz(t-1)\bbepsilon^{T}(t-1)]&{}= 
\bbPsi E[\bbz(t-2)\bbepsilon^{T}(t-1)]+\bbPsi\left[\begin{array}{c} 
\bbR_{h,\lambda}^{-1}(\bbP_{\alpha}-\bbP_{\beta})\\
\bbC
\end{array}\right]E[\bbeta(t-3)\bbepsilon^{T}(t-1)]\nonumber\\
&+\bbPsi\left[\begin{array}{c} \bbR_{h,\lambda}^{-1}\\
\mathbf{0}_{Jp\times Jp}
\end{array}\right]E[\bar{\bbeta}(t-3)\bbepsilon^{T}(t-1)]
+\left[\begin{array}{c} \bbR_{h,\lambda}^{-1}\\
\mathbf{0}_{Jp\times Jp}
\end{array}\right]E[\bbepsilon(t-1)\bbepsilon^{T}(t-1)]\nonumber\\
&=\bbPsi\bbR_{z\bbepsilon}(t-1)+\left[\begin{array}{c} \bbR_{h,\lambda}^{-1}\\
\mathbf{0}_{Jp\times Jp}
\end{array}\right]\bbR_\bbepsilon(t-1).\label{derived_corr_rec}
\end{align}
The second equality follows from the fact that the zero-mean communication 
noise vectors are independent of $\bbepsilon(t-1)$.
Scaling \eqref{derived_corr_rec} by $\lambda$ yields the desired result.

% % % % % % % % % % % % % % % % % % % % % % % % % % % % % % % % % % % % % % % %
%                         Appendix E                                          %
% % % % % % % % % % % % % % % % % % % % % % % % % % % % % % % % % % % % % % % %

\noindent\normalsize \emph{\textbf{E. Structure of matrices $\bbP_\alpha$, 
$\bbP_\beta$, $\bbR_{\bar{\bbeta}}$, $\bbR_{\bbeta}$, 
$\bbR_{\bar{\bbeta}_{\lambda}}$, and $\bbR_{\bbeta_{\lambda}}$}}: In 
order to relate the noise supervectors $\bbeta_\alpha(t)$ 
and $\bbeta_\beta(t)$ with $\bbeta(t)$ in 
\eqref{noisevector_Ch4},
introduce two $Jp\times (\sum_{j=1}^J|\calN_j|)p$ matrices
$\bbP_{\alpha}:=[\bbp_1\ldots\bbp_J]^T$ and
$\bbP_{\beta}:=[\bbp'_1\ldots\bbp_J']^T$. The
$(\sum_{j=1}^J|\calN_j|)p\times p$ submatrices $\bbp_j$, $\bbp'_j$ are
given by $\bbp_j:=[(\bbp_{j,1})^T\ldots(\bbp_{j,J})^T]^{T}$ and
$\bbp_j':=[(\bbp_{j,1}')^T\ldots(\bbp_{j,J}')^T]^{T}$, with
$\bbp_{j,r},\bbp_{j',r}$ defined for $r=1,\ldots,J$ as
\begin{equation*}%\label{rjbrprimejbbvecs}
\bbp_{j,r}^T:=\left\{\begin{array}{cc}
\frac{c}{4}\bbb_{|\mathcal{N}_{r}|,r(j)}^T\otimes\bbI_p & \textrm{if
}
j\in\mathcal{N}_{r}\\
\mathbf{0}_{p\times |\mathcal{N}_{r}|p} & \textrm{if }
j\notin\mathcal{N}_{r}
\end{array}
\right.,\quad
 (\bbp_{j,r}')^T:=\left\{\begin{array}{cc}
\frac{c}{4}\mathbf{1}_{1\times |\mathcal{N}_{r}|}\otimes
\bbI_{p} & \textrm{ if } r=j\\
\mathbf{0}_{p\times |\mathcal{N}_{r}|p} & \textrm{if } r\neq j
\end{array}
\right..
\end{equation*}
Note that $r(j)\in\{1,\ldots,|\mathcal{N}_{r}|\}$ denotes the order in
which $\bbeta_j^{r}(t)$ appears in
$\{\bbeta_{j'}^{r}(t)\}_{j'\in\mathcal{N}_{r}}$ [cf. \eqref{noisevector_Ch4}].
It is straightforward to verify that $\bbeta_{\alpha}(t)=\bbP_\alpha\bbeta(t)$ 
and $\bbeta_{\beta}(t)=\bbP_\beta\bbeta(t)$.

Moving on to characterize the structure of $\bbR_{\bar{\bbeta}}$ and  
$\bbR_{\bbeta}$, from 
\eqref{baretajt_Ch4} and recalling
that communication noise vectors are assumed uncorrelated in space [cf. (a3)], 
it follows that
\begin{equation*}
\bbR_{\bar{\bbeta}}=\textrm{bdiag}\left(\sum_{j'\in\calN_1\backslash\{1\}}\bbR_
{\bbeta_{1,j'}},\ldots,
\sum_{j'\in\calN_J\backslash\{J\}}\bbR_{\bbeta_{J,j'}}\right).
\end{equation*}
Likewise, it follows from \eqref{noisevector_Ch4} that $\bbR_{\bbeta}$
is a block diagonal matrix with a total of $\sum_{j=1}^{J}|\calN_j|$
diagonal blocks of size $p\times p$, namely
\begin{equation*}
\bbR_{\bbeta}=\textrm{bdiag}\left(\{\bbR_{\bbeta_{j',1}}\}_{j'\in\calN_{1}},
\ldots,
\{\bbR_{\bbeta_{j',J}}\}_{j'\in\calN_{J}}\right).
\end{equation*}
Note also that the blocks $\bbR_{\bbeta_{j,j}}=\mathbf{0}_{p\times p}$ for
all $j\in\calJ$, since a sensor does not communicate with itself. In both 
cases, the block diagonal structure of the covariance matrices is due to the 
spatial uncorrelatedness of the noise vectors.

What is left to determine is the structure of $\bbR_{\bar{\bbeta}_{\lambda}}$ 
and $\bbR_{\bbeta_{\lambda}}$. From \eqref{noiselambdavectors} one readily 
obtains
\begin{equation}
\bbR_{\bar{\bbeta}_\mu}= \left[\begin{array}{c} \bbR_{h,\lambda}^{-1}\\
\mathbf{0}_{Jp\times Jp}
\end{array}\right]\bbR_{\bar{\bbeta}}\left[\begin{array}{c} 
\bbR_{h,\lambda}^{-1}\\
\mathbf{0}_{Jp\times Jp}
\end{array}\right]^{T},{\quad}\bbR_{\bbeta_\mu}=\left[\begin{array}{c} 
\bbR_{h,\lambda}^{-1}  
(\bbP_{\alpha}-\bbP_{\beta})\\
\bbC
\end{array}\right]\bbR_{\bbeta}\left[\begin{array}{c} \bbR_{h,\lambda}^{-1}
(\bbP_{\alpha}-\bbP_{\beta})\\
\bbC
\end{array}\right]^{T}.
\end{equation}
%

% % % % % % % % % % % % % % % % % % % % % % % % % % % % % % % % % % % % % % % %
%                         References                                          %
% % % % % % % % % % % % % % % % % % % % % % % % % % % % % % % % % % % % % % % %
% 
 
\newpage
\bibliographystyle{IEEEtranS}
\bibliography{IEEEabrv,biblio}

% Generated by IEEEtranS.bst, version: 1.12 (2007/01/11)
\begin{thebibliography}{10}
\providecommand{\url}[1]{#1}
\csname url@samestyle\endcsname
\providecommand{\newblock}{\relax}
\providecommand{\bibinfo}[2]{#2}
\providecommand{\BIBentrySTDinterwordspacing}{\spaceskip=0pt\relax}
\providecommand{\BIBentryALTinterwordstretchfactor}{4}
\providecommand{\BIBentryALTinterwordspacing}{\spaceskip=\fontdimen2\font plus
\BIBentryALTinterwordstretchfactor\fontdimen3\font minus
  \fontdimen4\font\relax}
\providecommand{\BIBforeignlanguage}[2]{{%
\expandafter\ifx\csname l@#1\endcsname\relax
\typeout{** WARNING: IEEEtranS.bst: No hyphenation pattern has been}%
\typeout{** loaded for the language `#1'. Using the pattern for}%
\typeout{** the default language instead.}%
\else
\language=\csname l@#1\endcsname
\fi
#2}}
\providecommand{\BIBdecl}{\relax}
\BIBdecl

\bibitem{Bertrand_Moonen_Sayed_TSP_Diffusion_BC_RLS}
\BIBentryALTinterwordspacing
A.~Bertrand, M.~Moonen, and A.~H. Sayed, ``Diffusion bias-compensated {RLS}
  estimation over adaptive networks,'' \emph{IEEE Trans. Signal Process.},
  vol.~59, 2011. [Online]. Available:
  \url{http://ieeexplore.ieee.org/stamp/stamp.jsp?tp=$\&$arnumber=5975252}
\BIBentrySTDinterwordspacing

\bibitem{Bertsekas_Book_Distr}
D.~P. Bertsekas and J.~N. Tsitsiklis, \emph{Parallel and Distributed
  Computation: Numerical Methods}, 2nd~ed.\hskip 1em plus 0.5em minus
  0.4em\relax Athena-Scientific, 1999.

\bibitem{Cattivelli_Lopes_Sayed_TSP_Diffusion_RLS}
F.~S. Cattivelli, C.~G. Lopes, and A.~H. Sayed, ``Diffusion recursive
  least-squares for distributed estimation over adaptive networks,'' \emph{IEEE
  Trans. Signal Process.}, vol.~56, pp. 1865--1877, May 2008.

\bibitem{Cattivelli_Sayed_SSP_2009}
F.~S. Cattivelli and A.~H. Sayed, ``Hierarchical diffusion algorithms for
  distributed estimation,'' in \emph{Proc. of the Wrkshp. on Statistical Signal
  Proc.}, Cardiff, Wales, Aug. 2009, pp. 537--540.

\bibitem{Cattivelli_Sayed_TSP_Diffusion_LMS}
------, ``Diffusion {LMS} strategies for distributed estimation,'' \emph{IEEE
  Trans. Signal Process.}, vol.~58, pp. 1035--1048, Mar. 2010.

\bibitem{Cattivelli_Sayed_TAC_Kalman}
------, ``Diffusion strategies for distributed {K}alman filtering and
  smoothing,'' \emph{IEEE Trans. Autom. Contr.}, vol.~55, pp. 2069--2084, 2010.

\bibitem{Cattivelli_Sayed_TSP_Incremental_LMS}
------, ``Analysis of spatial and incremental {LMS} processing for distributed
  estimation,'' \emph{IEEE Trans. Signal Process.}, vol.~59, pp. 1465--1480,
  Apr. 2011.

\bibitem{Chouvardas_Slavakis_Theoridis_TSP_Projections}
S.~Chouvardas, K.~Slavakis, and S.~Theodoridis, ``Adaptive robust distributed
  learning in diffusion sensor networks,'' \emph{IEEE Trans. Signal Process.},
  vol.~59, pp. 4692--4707, Oct. 2011.

\bibitem{Gupta_Kumar_Capacity_WN}
P.~Gupta and P.~R. Kumar, ``The capacity of wireless networks,'' \emph{IEEE
  Trans. on Information Theory}, vol.~46, pp. 388--404, Mar. 2000.

\bibitem{Hatano_Consensus_Noise_2007}
Y.~Hatano, A.~K. Das, and M.~Mesbahi, ``Agreement in presence of noise:
  pseudogradients on random geometric networks,'' in \emph{Proc. of the 44th
  Conf. on Dec. and Contr.}, Seville, Spain, Dec. 2005, pp. 6382--6387.

\bibitem{Kar_Consensus_Noise_TSP}
S.~Kar and J.~M.~F. Moura, ``Distributed consensus algorithms in sensor
  networks with imperfect communication: link failures and channel noise,''
  \emph{IEEE Trans. Signal Process.}, vol.~57, pp. 355--369, Jan. 2009.

\bibitem{Li_Lopes_Chambers_Sayed_TSP_Incremental_LMS}
L.~Li, C.~G. Lopes, J.~Chambers, and A.~H. Sayed, ``Distributed estimation over
  an adaptive incremental network based on the affine projection algorithm,''
  \emph{IEEE Trans. Signal Process.}, vol.~58, pp. 151--164, Jan. 2010.

\bibitem{Lopes_Sayed_TSP_Incremental}
C.~G. Lopes and A.~H. Sayed, ``Incremental adaptive strategies over distributed
  networks,'' \emph{IEEE Trans. Signal Process.}, vol.~55, pp. 4064--4077, Aug.
  2007.

\bibitem{Lopes_Sayed_TSP_Diffusion_LMS}
------, ``Diffusion least-mean squares over adaptive networks: Formulation and
  performance analysis,'' \emph{IEEE Trans. Signal Process.}, vol.~56, pp.
  3122--3136, Jul. 2008.

\bibitem{Gonzalo_Yannis_GG_DRLS}
G.~Mateos, I.~D. Schizas, and G.~B. Giannakis, ``Distributed recursive
  least-squares for consensus-based in-network adaptive estimation,''
  \emph{IEEE Trans. Signal Process.}, vol.~57, pp. 4583--4588, Nov. 2009.

\bibitem{Gonzalo_Yannis_GG_DLMS_Perf}
------, ``Performance analysis of the consensus-based distributed {LMS}
  algorithm,'' \emph{EURASIP Journal on Advances in Signal Processing}, Dec.
  2009, article ID 981030.

\bibitem{Nedic_Incremental_2001}
A.~Nedic and D.~P. Bertsekas, ``Incremental subgradient methods for
  nondifferentiable optimization,'' \emph{SIAM Jounral on Optimization},
  vol.~12, pp. 109--138, Jan. 2001.

\bibitem{Papadimitrious_Complexity_Book}
C.~H. Papadimitriou, \emph{Computational Complexity}.\hskip 1em plus 0.5em
  minus 0.4em\relax Addison-Wesley, 1993.

\bibitem{Rabbat_Incremental_Algorithms_2005}
M.~G. Rabbat and R.~D. Nowak, ``Quantized incremental algorithms for
  distributed optimization,'' \emph{IEEE Journal on Sel. Areas In Comm.},
  vol.~23, pp. 798--808, 2005.

\bibitem{Rabbat_SPAWC_2005}
M.~G. Rabbat, R.~D. Nowak, and J.~A. Bucklew, ``Generalized consensus
  computation in networked systems with erasure links,'' in \emph{Proc. of the
  Wrkshp. on Signal Proc. Adv. in Wireless Communications}, New York, NY, Jun.
  2005, pp. 1088-- 1092.

\bibitem{RNV'07}
S.~S. Ram, A.~Nedic, and V.~V. Veeravalli, ``Stochastic incremental gradient
  descent for estimation in sensor networks,'' in \emph{Proc. of 41st Asilomar
  Conf. on Signals, Systems, and Computers}, Pacific Grove, CA, 2007, pp.
  582--586.

\bibitem{Ale_Yannis_Stergios_GG_ControlsMag}
A.~Ribeiro, I.~D. Schizas, S.~I. Roumeliotis, and G.~B. Giannakis, ``Kalman
  filtering in wireless sensor networks: Incorporating communication cost in
  state estimation problems,'' \emph{IEEE Control Syst. Mag.}, vol.~30, pp.
  66--86, Apr. 2010.

\bibitem{Sayed_Adaptive_Book}
A.~H. Sayed, \emph{Fundamentals of Adaptive Filtering}.\hskip 1em plus 0.5em
  minus 0.4em\relax John Wiley \& Sons, 2003.

\bibitem{Lopes_Sayed_Incremental_RLS_2006}
A.~H. Sayed and C.~G. Lopes, ``Distributed recursive least-squares over
  adaptive networks,'' in \emph{Proc. of 40th Asilomar Conf. On Signals,
  Systems and Computers}, Pacific Grove, CA, Oct./Nov. 2006, pp. 233--237.

\bibitem{DRLS_Asilomar_2007}
I.~D. Schizas, G.~Mateos, and G.~B. Giannakis, ``Consensus-based distributed
  recursive least-squares estimation using ad hoc wireless sensor networks,''
  in \emph{Proc. of 41st Asilomar Conf. On Signals, Systems and Computers},
  Pacific Grove, CA, Nov. 2007, pp. 386--390.

\bibitem{Yannis_Gonzalo_GG_DLMS}
------, ``Distributed {LMS} for consesus-based in-network adaptive
  processing,'' \emph{IEEE Trans. Signal Process.}, vol.~57, pp. 2365--2381,
  Jun. 2009.

\bibitem{Yannis_Ale_GG_PartI}
I.~D. Schizas, A.~Ribeiro, and G.~B. Giannakis, ``Consensus in ad hoc {WSN}s
  with noisy links - part {I}: Distributed estimation of deterministic
  signals,'' \emph{IEEE Trans. Signal Process.}, vol.~56, pp. 350--364, Jan.
  2008.

\bibitem{Solo_Stability_LMS}
V.~Solo, ``The stability of {LMS},'' \emph{IEEE Trans. Signal Process.},
  vol.~45, pp. 3017--3026, Dec. 1997.

\bibitem{Solo_Adaptive_Book}
V.~Solo and X.~Kong, \emph{Adaptive Signal Processing Algorithms: Stability and
  Performance}.\hskip 1em plus 0.5em minus 0.4em\relax Prentice Hall, 1995.

\bibitem{Takahashi_Yamada_Sayed_TSP_LMS_Combiners}
N.~Takahashi, I.~Yamada, and A.~H. Sayed, ``Diffusion least-mean squares with
  adaptive combiners: Formulation and performance analysis,'' \emph{IEEE Trans.
  Signal Process.}, vol.~58, pp. 4795--4810, Sep. 2011.

\bibitem{Tseng_AMA_1991}
P.~Tseng, ``Applications of a splitting algorithm to decomposition in convex
  programming and variational inequalities,'' \emph{SIAM Journal on Control and
  Optimization}, vol.~29, pp. 119--138, Jan. 1991.

\bibitem{Tu_Sayed_JSTSP_Mobility}
S.-Y. Tu and A.~H. Sayed, ``Mobile adaptive networks,'' \emph{IEEE Sel. Topics
  Signal Process.}, vol.~5, pp. 649--664, Aug. 2011.

\bibitem{Xiao_Ale_Luo_GG_SPMag}
J.-J. Xiao, A.~Ribeiro, T.~Luo, and G.~B. Giannakis, ``Distributed
  compression-estimation using wireless sensor networks,'' \emph{IEEE Signal
  Process. Mag.}, vol.~23, pp. 27--41, Jul. 2006.

\bibitem{Xiao_Fast_Iterations_2004}
L.~Xiao and S.~Boyd, ``Fast linear iterations for distributed averaging,''
  \emph{Systems and Control Letters}, vol.~53, pp. 65--78, Sep. 2004.

\bibitem{Sayed_Energy_Conservation}
N.~R. Yousef and A.~H. Sayed, ``A unified approach to the steady-state and
  tracking analysis of adaptive filters,'' \emph{IEEE Trans. Signal Process.},
  vol.~49, pp. 314--324, Feb. 2001.

\end{thebibliography}

% % % % % % % % % % % % % % % % % % % % % % % % % % % % % % % % % % % % % % % %
%                         Figures                                             %
% % % % % % % % % % % % % % % % % % % % % % % % % % % % % % % % % % % % % % % %

\newpage

\begin{figure}[h]
\centering
\includegraphics[width=3in]{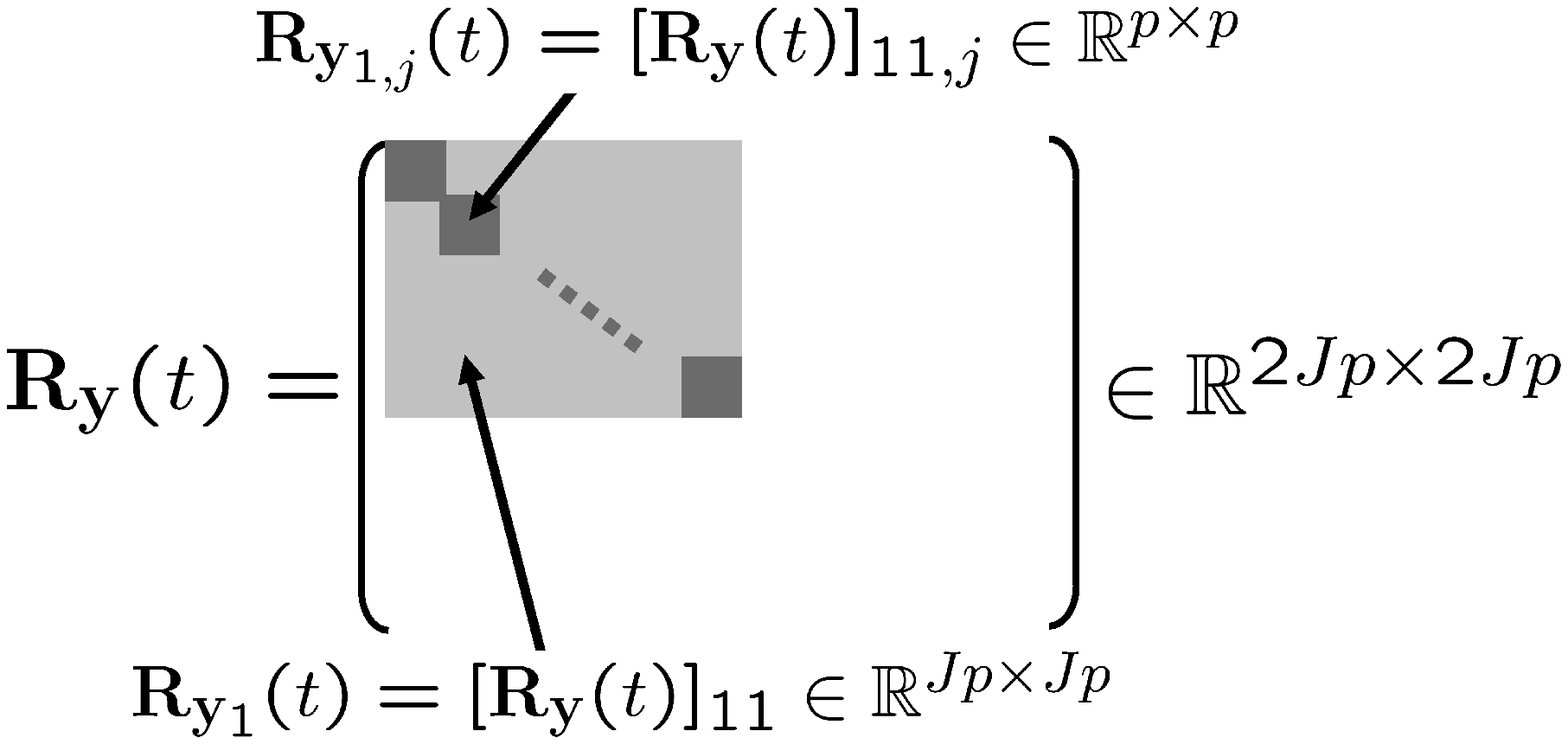}
\caption{The covariance matrix $\bbR_y(t)$ and some of its inner
submatrices that are relevant to the performance evaluation of the D-RLS algorithm.}
 \label{blocks}
\end{figure}

\begin{figure}[h]
\centering
\includegraphics[width=5in]{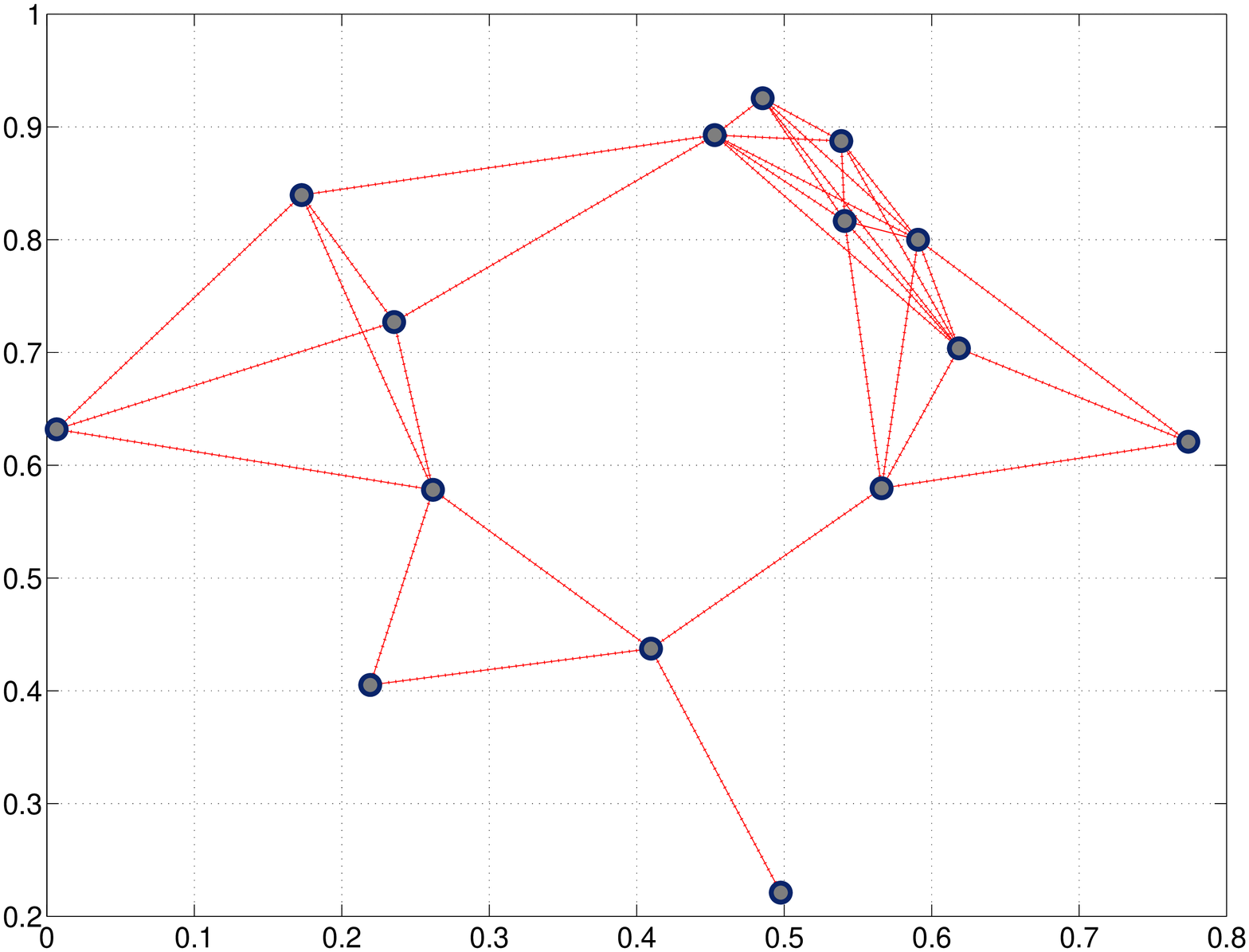}
\caption{An ad hoc WSN with $J=15$ sensors, generated as a realization of the
random geometric graph model on the unity square, with communication range $r=0.3$.}
 \label{wsn}
\end{figure}

\begin{figure}[h]
\centering
\includegraphics[width=5in]{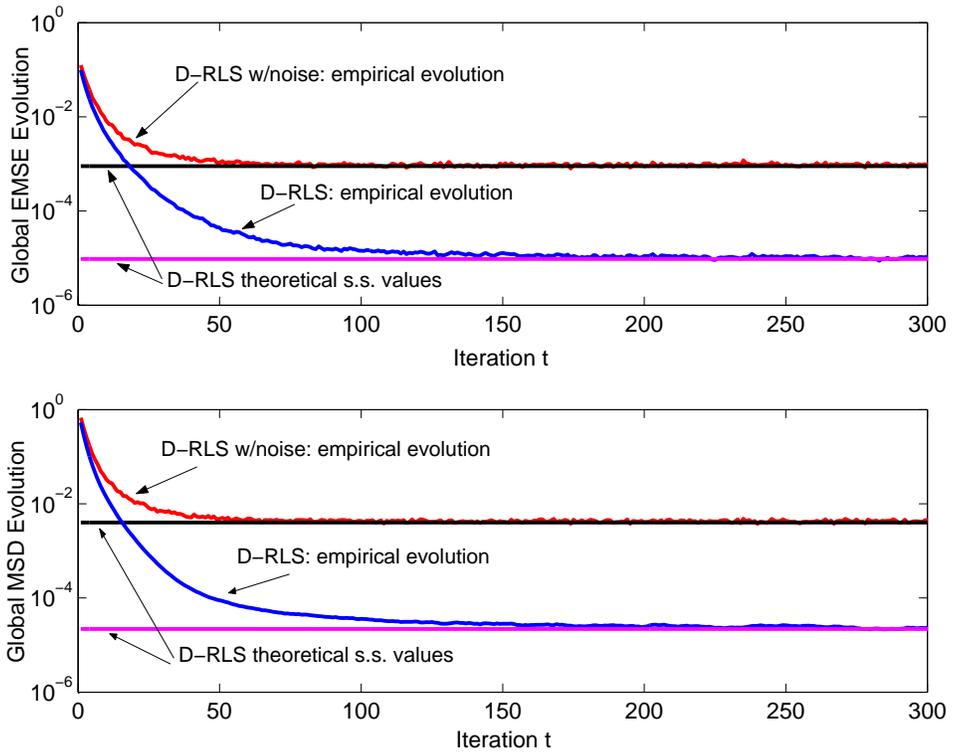}
\caption{Global steady-state performance evaluation. D-RLS is ran with ideal 
links
and when communication noise with variance $\sigma_\eta^2=10^{-1}$ is present.}
 \label{global_perf_DRLS}
\end{figure}

\begin{figure}[h]
\centering
\includegraphics[width=6in]{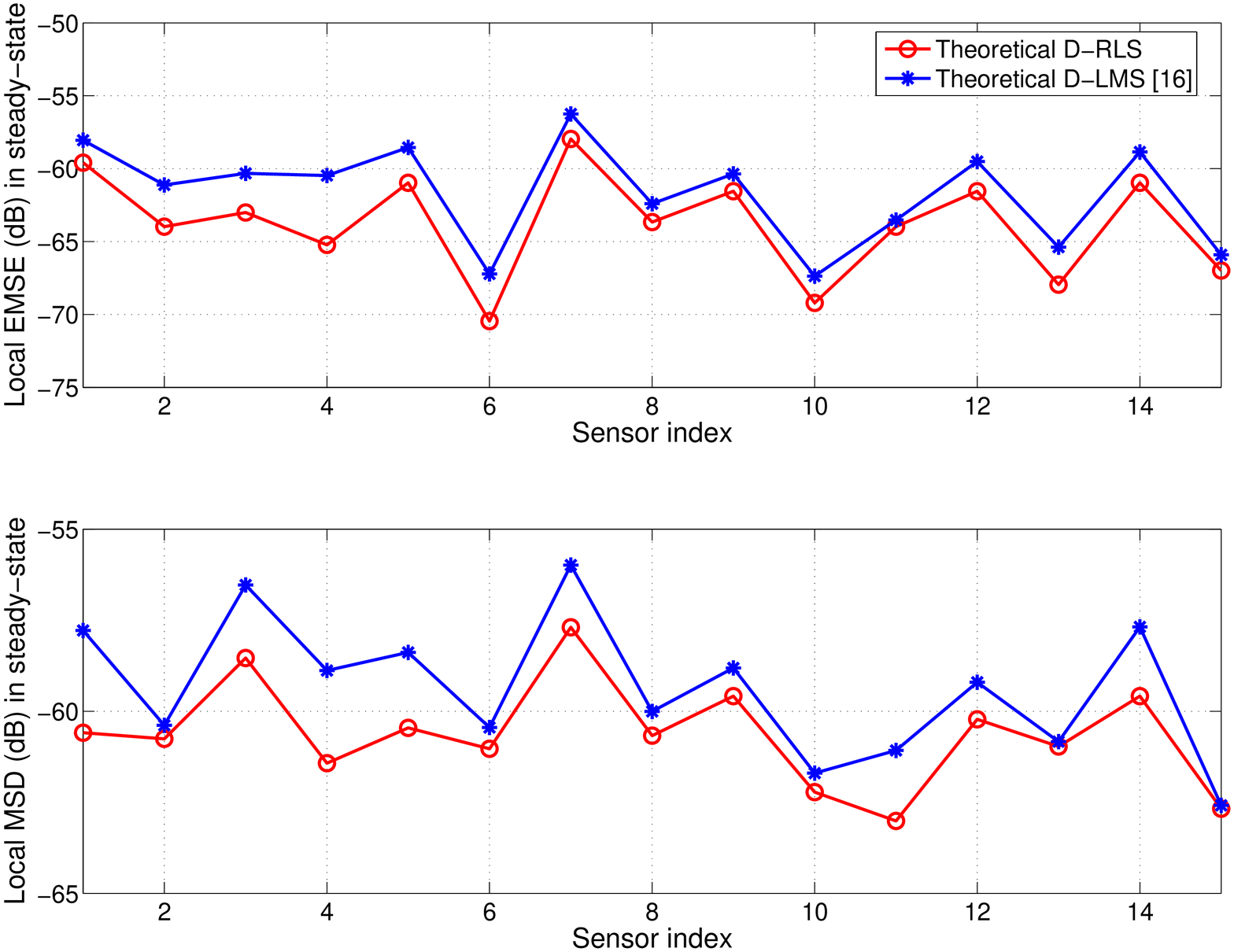}
\caption{Local steady-state performance evaluation. D-RLS is compared to the
D-LMS algorithm in~\cite{Gonzalo_Yannis_GG_DLMS_Perf}.} \label{local_perf_DRLS}
\end{figure}

%\begin{figure}[h]
%\begin{center}
%\centerline{\epsfig{file=wsn.eps,width=0.4\linewidth}} \caption{An ad hoc WSN with $J=30$ sensors.}
%\label{WSN}
%\end{center}\vspace{-1.8cm}
%\end{figure}
%
%\begin{figure}[h]
%\begin{center}
%\centerline{\epsfig{file=Fig1.eps,width=0.4\linewidth}\hspace*{0.3in}
%\epsfig{file=Fig2.eps,width=0.4\linewidth}} \caption{Global
%network performance in a distributed power spectrum estimation
%task: (left) MSE (learning curve); (right) MSD.} \label{MSE_MSD}
%\end{center}\vspace{-1.8cm}
%\end{figure}
%
%\begin{figure}[h]
%\begin{center}
%\centerline{\epsfig{file=Fig3.eps,width=0.4\linewidth}\hspace*{0.3in}
%\epsfig{file=Fig4.eps,width=0.4\linewidth}} \caption{(left) Local
%(per-sensor) performance in a distributed power spectrum estimation
%task; (right) Tracking with STD-RLS.} \label{local_tracking}
%\end{center}\vspace{-1.8cm}
%\end{figure}

\end{document}